\newtheorem{theorem}{Theorem}[section]
\newtheorem{lemma}[theorem]{Lemma}
\renewcommand{\leq}{\leqslant}
\renewcommand{\geq}{\geqslant}
\title{Pricing Policies for Selling Indivisible Storable Goods\\ to Strategic Consumers\\[-1cm]}
\begin{document}
\bibliographystyle{plainnat}

\date{\today}

%\sloppy

\maketitle

\renewcommand*{\thefootnote}{\fnsymbol{footnote}}

\vspace{-1cm}
\begin{center}
{\sc Gerardo Berbeglia}\footnote{Melbourne Business School, University of Melbourne. Email: {\tt g.berbeglia@mbs.edu} },
{\sc Gautam Rayaprolu}\footnote{McGill University. Email: {\tt gautam.rayaprolu@mail.mcgill.ca} }
and
{\sc Adrian Vetta}\footnote{McGill University. Email: {\tt adrian.vetta@mcgill.ca} }
\end{center}

\renewcommand*{\thefootnote}{\arabic{footnote}}
\addtocounter{footnote}{-3}

\begin{abstract}
We study the dynamic pricing problem faced by a monopolistic retailer who sells a storable product to forward-looking consumers. In this framework, the two major pricing policies (or mechanisms) studied in the literature are the preannounced (commitment) pricing policy and the contingent (threat or history dependent) pricing policy. We analyse and compare these pricing policies in the setting where the good can be purchased along a finite time horizon in indivisible atomic quantities. First, we show that, given linear storage costs, the retailer can compute an optimal preannounced pricing policy in polynomial time by solving a dynamic program. Moreover, under such a policy, we show that consumers do not need to store units in order to anticipate price rises. Second, under the contingent pricing policy rather than the preannounced pricing mechanism, (i) prices could be lower, (ii) retailer revenues could be higher, and (iii) consumer surplus could be higher. This result is surprising, in that these three facts are in complete contrast to the case of a retailer selling divisible storable goods \citep{dudine2006storable}. Third, we quantify exactly how much more profitable a contingent policy could be with respect to a preannounced policy. Specifically, for a market with $N$ consumers, a contingent policy can produce a multiplicative factor of $\Omega(\log N)$ more revenues than a preannounced policy, and this bound is tight.

%\keywords{price optimization, storable goods, monopoly, price discrimination, price commitment, dynamic programming}
\end{abstract}

%{\color{blue}
\section{Introduction}
%The study of the dynamic interaction between a monopolist and its consumers is a central topic in microeconomic theory.
The problem of determining an optimal pricing policy for a monopolistic retailer has attracted the attention of many researchers from Economics and Operations Research. These problems are generally studied in a dynamic game setting in which the players are the single retailer and the consumers. At each period, the seller sets a selling price (under pre-announced pricing all prices are set in advance) and buyers decide whether to buy or wait (and potentially buy a product later). This interplay between a retailer and consumers may sometimes lead to surprising equilibria. Some relevant questions in this topic are: What does the price path look like in an equilibrium? How profitable can the retailer be, and what pricing policy should it use? Is it possible to compute the optimal pricing policy efficiently?
In this paper we study the dynamic pricing problem faced by a retailer (a monopolist) who sells storable goods to forward-looking consumers along a finite time horizon. Storable goods are goods that can be stored for a long period but then loses all value after its
first use. Prominent examples of storable goods include non-perishable grocery items and many natural commodities, for example, oil and natural gas. A less conspicuous example concerns digital goods. Digital goods are durable, but the technology often exists to make them perishable and, thus, convert them into storable goods. \footnote{A firm may wish to reduce
the lifespan of a durable good to avoid a Coasian outcome, that is, one where the firm
loses all monopoly power. Indeed, Amazon has recently been granted a US patent (8,364,595)
for a second-hand market in digital goods that restricts the number of times a good can be resold, that is, limits
the longevity of the goods. Apple has applied for a similar patent (US 20130060616 A1).}

Dynamic pricing, a practice which is very common in most industries, can be seen as a weak form of price discrimination. The retailer does not dictate how consumers are segmented, but rather consumers segment themselves by self-selecting the time (and corresponding price) at which they prefer to buy the item. A classic and well studied dynamic pricing problem occurs when a monopolist sells a durable good to strategic consumers.\footnote{A {\em durable good} is a long-lasting good that can be ``stored" and consumed repeatedly over time.
Examples include digital goods, land and, to a large degree, housing, metals, electronic goods, etc.}. A natural question in this setting is whether the seller is better-off by pre-announcing all future prices (i.e. price-commitment) rather than using a price-contingent strategy (i.e. revealing prices dynamically). Ronald \citet{coase1972durability} conjectured that a durable good monopolist who lacks price commitment
cannot sell the good above the competitive price. \citet{stokey1979intertemporal} then showed that the optimal preannounced pricing policy consists of setting a fixed price. An intuitive argument of why a preannounced pricing policy might be beneficial to the retailer is the following: in an environment with dynamic pricing, consumers may sometimes hesitate whether they should buy a product today or wait for a potential discount in the future. In some contexts, this may hurt the seller's profit. On the other hand, a pre-announced pricing policy gives a seller the power of commitment and thus, can induce consumers to buy at a high price since they can be certain that prices will not drop enough to make it beneficial to wait.

The Coase conjecture was first proven by \citet{gul1986foundations} under an infinite time horizon model with non-atomic consumers. However, the conjecture relied upon several assumptions, such as that the time between periods approaches zero and that there is an continuum of consumers. Since then, there have been several studies regarding the performance of different pricing policies for various economic models for durable goods, perishable goods, under different time horizons. \citet{guth1998durable} studied a finite horizon model with a continuum of consumers (i.e. non-atomic demand) and showed that when consumer valuations follow a uniform distribution, there exists a subgame perfect equilibrium, as period lengths approach zero, in which the monopolist profits converge to the static monopoly profits discounted to the beginning of the game. \citet{bagnoli1989durable} studied the durable good monopoly problem in which there are finite (atomic) consumers (i.e. discrete demand). When the time horizon is infinite, they proved a surprising result: the
existence of a subgame perfect Nash equilibrium in which the duropolist extracts all the consumer surplus. This equilibrium refutes the Coase conjecture when there is a finite set of atomic buyers and an infinite time horizon. Indeed, it shows there exists a subgame perfect Nash equilibrium where duropoly profits exceed static monopoly profits by an unbounded factor. \citet{berbeglia2014} studied the same model as \citet{bagnoli1989durable} and proved that when there is a finite time horizon the monopolist profits cannot exceed twice the static monopoly profits. A major difference between the durable good literature (in particular \citet{bagnoli1989durable} and \citet{berbeglia2014}) and our paper is that in the durable good monopoly problems, consumers have a \emph{single} demand and the valuation is assigned for the \emph{whole game}. In our model, however, consumers can have a demand of multiple units and the valuation obtained obtained is time-dependent. We also showed in Section 4.2 that the duropoly model of \citet{berbeglia2014} can be seen as a special case of the storable good model presented in this paper. Nevertheless, we showed that the conclusions with regards to the benefits of price commitment obtained in \citet{berbeglia2014} do not generalize to this broader class of models.

\citet{dasu2010dynamic} analyzed preannounced pricing policies for a retailer for the case in which consumers are finite and the number of items are limited (i.e. there is an availability constraint). The authors showed that dynamic pricing (through preannounced pricing) is beneficial for the retailer, but the benefit decreases as the market size increases.
\citet{su2007intertemporal} studied contingent (subgame perfect or threat) pricing  policies in a finite time horizon model where consumers, who arrive continuously, are heterogeneous in their valuation and in their patience. The author characterized the structure of optimal pricing policies and showed that, depending on what population type dominates the market, prices are increasing or decreasing.  \citet{aviv2008optimal} studied contingent pricing and preannounced pricing equilibria for a related two-period model of a retailer selling a finite number of items to strategic consumers that arrive under a Poison process. The authors proved that a preannounced policy, with respect to contingent pricing, can increase monopoly profits by up to eight percent. \citet{li2001pricing} consider optimal pricing policies for another model where items are also limited inspired by the revenue management problem faced by airlines. The potential benefit of strategic capacity rationing under preannounced pricing was analyzed by \citet{ liu2008strategic} in a two period pricing model of a durable good. Their main result is that rationing can be beneficial for the retailer but only when consumers are risk-adverse.  \citet{correa2016contingent} studied a new type of pricing mechanism so called {\em contingent preannounced pricing} in which the retailer commits to a price menu which states what would be the future price as a function of the number of available inventory. The authors showed that under this pricing policy, there exists a unique equilibrium when there is a single unit of inventory, but multiple equilibria may exist otherwise. Computational experiments showed that the contingent preannounced pricing outperforms previously proposed pricing mechanisms. Very recently, \citet{surasvadi2017using} examined a contingent price markdown mechanism with guaranteed reservation in which the retailer sells a limited number of items over a finite time horizon. The authors have identified market conditions in which the proposed mechanism outperforms a price commitment policy. Other models of durable good markets that incorporate multi-generation products and the effects of price sensitivity has been studied by \citet{kuo2012dynamic} and \citet{liberali2011effects} respectively.

%COMMENT: I SUPPOSE THE REFEREE WOULD LIKE US TO MENTION A CONNECTION BETWEEN OUR PAPER AND THE PAPERS REFERENCED IN THE PARAGRAPH ABOVE.

A related setting in which there is a complex dynamic interaction between a retailer and the consumers is when the good is storable -- a {\em storable good} is a good that can be
stored prior to consumption but it is not durable as it perishes upon consumption.
Consequently, as stated, such a good can be stored for a long period but then loses all value after its
first use.
Prominent examples of storable goods include non-perishable grocery items and many
natural commodities, for example, oil and natural gas. A less conspicuous example concerns digital goods.
%Digital goods are durable, but the technology often exists to make them perishable
%and, thus, convert them into storable goods.\footnote{A firm may wish to reduce
%the lifespan of a durable good to avoid a Coasian outcome, that is, one where the firm
%loses all monopoly power. Indeed, Amazon has recently been granted a US patent (8,364,595)
%for a second-hand market in digital goods that restricts the number of times a good can be resold, that is, limits
%the longevity of the goods.
%Apple has applied for a similar patent (US 20130060616 A1).}
As with durable good markets, the consumers in a storable good market can strategically time their purchases by taking into
account current and expected future prices. In a storable good market, a consumer who observes a low price
promotion sale, may increase her purchases not only to augment the current consumption
level (the {\em consumption effect}) but also to stockpile the good for future consumption (the {\em stockpiling effect}).
There is a large amount of evidence from empirical work showing that the stockpiling effect -- also known as {\em demand anticipation} --
is non-negligible in markets for storable goods.
For example, \citet{pesendorfer2002retail} investigated supermarket price reductions for ketchup
products in Springfield, Missouri. His main finding is that demand at
low prices depends on past prices, which suggests the existence of a
stockpiling effect. Similarly, \citet{hendel2006sales}
examined the price and demand dynamics of soft drinks, laundry detergents,
and yogurt from nine supermarkets in the US. This analysis also showed that demand
increases during price reductions are, in part, due the stockpiling effect.
% see, for example, \citet{hendel2004intertemporal}, \citet{hendel2006sales},
%and \citet{pesendorfer2002retail}.
To capture the potential for stockpiling, the models of storable good markets
assume that consumers have access to inventory space for an associated cost.\footnote{These storable goods
models are also referred to as {\em dynamic inventory models}.}
Again, the question of whether the seller is better-off by pre-announcing all future prices (i.e. price-commitment) rather than using a price-contingent strategy (i.e. revealing prices dynamically) is also important in this setting. While it is generally the case that firms do not pre-announce future prices in most markets, there are several examples where, directly or indirectly they provide an approximation of the pricing schedule. In clothes retailing, consider for example the American luxury department store Nordstrom. As stated in \citet{elmaghraby2009will}, Nordstrom pre-announces several weeks in advance, the starting dates of its July sales. While the exact price reduction is not announced exactly, consumers know that most items would be 25-50 percent off.
In the airline industry, some airlines provide mechanisms to encourage consumers not to strategically schedule their purchases. For example, South West pre-announced future prices for some flights \citep{heskett2010southwest}. Other airlines, such as easyJet, do not pre-announce explicitly their prices, but commit to a monotone increasing price policy \citep{koenigsberg2008easyjet}. In the theatre industry, \citet{tereyaugouglu2017pricing} empirically tested whether a performing art organization could increase sales by using a pre-announced pricing strategy. Their results suggest that for this organization, price-commitment is better than a contingent pricing policy. In other industries, pre-announced pricing mechanisms and dynamic pricing policies coexist, we provide two examples. The first one is the multi-billion dollar industry of cloud computing services such as Amazon web services. These platforms provide several long-term contract services where prices are pre-announced, together with short term contracts whose prices are contingent to the current demand (spot market pricing). The second example is transportation services such as those provided by Uber and Lyft, where prices are typically pre-announced, but the policy may switch to a dynamic pricing scheme in situations with high demand.

%\subsection{Related literature} \label{sec:literature}

\citet{dudine2006storable} studied a market where the retailer sells a storable good to
consumers with time-dependent demand over an arbitrary (but finite) number of time periods. They
proved that, under some mild assumptions on consumer demand, if the retailer uses a preannounced price
mechanism then consumers will not stockpile any goods in equilibrium. They also showed that (i) prices are higher,
(ii) profits are lower, and (iii) consumer surplus is lower if the retailer uses a contingent pricing
mechanism rather than a preannounced price mechanism. These results also suggest, unlike the classical
results that proved the Coase conjecture
in durable good settings, that monopoly power in storable good markets leads to several inefficiencies. The results
of \citet{dudine2006storable} have later been extended to other settings of a storable good market, namely a two-level vertically
integrated monopoly \citep{nie2009commitment}, and a duopoly market model \citep{nie2009commitment_duopoly}.

A major assumption in the storable good market model of \citet{dudine2006storable} is that the good is infinitesimally divisible. In addition, they consider that either there is a single consumer in the market who can always obtain some positive additional utility by consuming an additional fraction of the good (no matter how small that fraction is), or that there is a continuum of non-atomic buyers. In our paper, we will relax those assumption by imposing that goods are not divisible. Other models for storable goods studied in the literature include a continuous time stochastic inventory model for a commodity traded in a spot market \citep{chiarolla2015optimal} and a model where the storable good has a fixed shelf-life \citep{herbon2014optimal}. The literature of lot-sizing is also related. Lot-sizing models are constructed to optimize production planning and inventory management also in a multi-period setting. An important difference, however, is that in lot-sizing problems consumers are not strategic and therefore the optimization problem is not subject to equilibrium constraints (see, e.g. \citet{drexl1997lot}, \citet{jans2008modeling}, and \citet{robinson2009coordinated}).

% related literature:
%@article{dasu2010dynamic,
%  title={Dynamic pricing when consumers are strategic: Analysis of posted and contingent pricing schemes},
%
%@article{kuo2012dynamic,
%  title={Dynamic pricing of limited inventories for multi-generation products},
%
%@article{herbon2014optimal,
%  title={Optimal piecewise-constant price under heterogeneous sensitivity to product freshness},
%
%@article{kuo2012dynamic,
%  title={Dynamic pricing of limited inventories for multi-generation products},
%

%\subsection{{\sc Background.}}\label{sec:background}
%{\bf [Add...]}

\subsection{Contributions} \label{sec:results}
In this paper we consider the model of a retailer selling storable goods used in \citet{dudine2006storable}, but in which the items are {\em indivisible}. Specifically, we assume that either (1) there is a finite (possibly very large) number of buyers with a unitary demand per period or (2) there is a single buyer with an arbitrary demand per period, but can only obtain value from an integral number of items.

Our main contributions are as follows. We first characterize the optimal preannounced pricing strategy for the retailer.
We then show that it can be computed efficiently and that, under this preannounced price mechanism,
the consumers will not stockpile the good. We next show that it is possible for (i) prices to be lower, (ii) retailer's profits to be higher
and (iii) the consumer surplus to be higher, if the retailer uses a contingent pricing mechanism rather than a preannounced price mechanism. These last three results are surprising, since they are in complete contrast
to the case of a retailer for divisible storable goods \citep{dudine2006storable}.

We then quantitatively compare the profitability of the preannounced price and contingent pricing mechanisms.
Specifically, for a market with $N$ consumers, a contingent pricing mechanism
can generate $\Omega(\log N)$ times more profits than an optimal preannounced price mechanism,
and this bound is tight. This is notable because previous models with a finite number of periods
have the property that the contingent pricing mechanism cannot significantly outperform the preannounced price mechanism.
In particular, we explain how our result contrasts with related results for a durable
good retailer. Finally, we examine the setting where inventory costs are concave. Interestingly, the consumers may now utilise storage when the retailer uses a preannounced price mechanism.

\section{The Model}\label{sec:model}
Consider a retailer facing consumers with demands for a consumable good over $T$
time periods. The retailer can produce units of the good at a unitary cost $u$ and wishes to select prices $\{p_1, p_2, \dots, p_T\}$
to produce sale quantities $\{q_1, q_2, \dots, q_T\}$ that maximize profits, $\sum_{t=1}^T p_t \cdot q_t$. We assume, without loss of generality, that $u=0$ and consequently, profit and revenue are interchangeable in this setting.
For a perishable good with one-period shelf-life, the corresponding optimization problem consists of optimizing each time period independently, as the goods bought or produced in one period cannot be consumed in the next one. But observe that as the goods shelf-life grows, the problem becomes more complicated as consumers can strategically buy items in one period only to consume them at a later date.
% ( i.e. a good that can be stored for a long time but which loses all value after its first use).
%Prominent examples of storable goods include several grocery items and many
%natural commodities, for example, oil. A less conspicuous example concerns digital goods.
%Digital goods are durable, but the technology often exists to make them perishable
%and, thus, convert them into storable goods.\footnote{A firm may wish to reduce
%the lifespan of a durable good to avoid a Coasian outcome, i.e. where the firm loses all monopoly power.}
Storability offers the consumers the opportunity to purchase the good {\em before}
they actually wish to use it. Indeed, this option is desirable if the resultant price savings exceed
storage costs. In accounting for this possibility, the retailer's optimization problem
now becomes non-separable.

%Following the literature, we consider two cases: the \emph{single-buyer} and the \emph{multi-buyer}
%settings. In the former, all the demand comes from a single buyer whose (time-depedent) utility depends on the quantity
%of indivisible items consumed. In the latter, there is a finite number of consumers, each of them having a
%single (indivisible) unitary demand per period. In previous studies, the results obtained from these two settings
%are essentially identical. However, for an
%indivisible good, these two settings are not equivalent and can produce very different outcomes; see Sections 4 and 5.

We consider two cases for this problem: the \emph{single-buyer} and the \emph{multi-buyer} settings. In the former, all the demand comes from a single buyer whose (time-depedent) value obtained depends on the quantity of indivisible items consumed. In the latter, there is a finite number of consumers, each having a single (indivisible) unitary demand per period. In previous studies, the results obtained from these two settings are essentially identical. However, for an indivisible good, these two settings are {\bf not equivalent} and may produce different outcomes. Specifically, in Section 4, we show an example where, in the case of a single buyer with multiple units of demand, the retailer can extract more profit than in the case where there are multiple buyers with unitary demand. Furthermore, in Section 5, we show that the two settings have different outcomes with respect to the use of storage under concave storage costs.

Our interest lies in analysing and comparing two standard pricing mechanisms: the \emph{preannounced price mechanism}
and the \emph{contingent pricing mechanism}. In the preannounced price mechanism, the retailer publicly
announces in advance (that is, at time $t=0$) a price schedule $\{p_1, p_2, \hdots, p_T\}$ where $p_t$
is the per unit price in period $t$. The consumers can then make their purchasing and storage decisions based upon
these prices. In the contingent pricing mechanism, the retailer announces prices sequentially over time.
Thus, the announced price $p_t=p(\mathcal{H},t)$ may depend not only on the current time period $t$, but
also on past history $\mathcal{H}$. Once the price at period $p_t$ is announced, the consumer (or consumers) must decide how
many units to purchase and store without knowledge of future prices  (the consumer also has the opportunity to purchase goods in the future, even in the case where she has already purchased goods earlier in the game).

For such a sequential game, the solutions we examine are pure subgame perfect Nash equilibria (SPNE). In particular,
this restricts our attention to {\em credible} contingent pricing mechanisms. It worth mentioning that the decision on which pricing mechanism to use may affect not only the retailer profits, but also the total purchases (i.e. observable demand), as well as the consumer surplus.

Before we proceed to explain precisely the two models (single-buyer and multiple-buyers), we enumerate several assumptions which are needed. (i) The retailer has enough capacity to satisfy all demand; (ii) The per unit cost for the retailer is constant over time and it is set to zero (this is without loss of generality, as one can always normalize it); (iii) The retailer does not incur in inventory cost (or equivalently, it can receive the items from the distributor in every period without an ordering cost); (iv) the retailer and the consumers know the distribution of the consumer valuations; (v) all consumers are known at time zero (no new consumers appear in the game); (vi) consumers do not incur in an ordering cost or delivery cost.

\subsection{The Single-Buyer Case.}
We begin with case where the retailer faces a unique buyer (monopsonist) for its good.
Let the value the buyer receives from consuming $x$ units of the
good at period $t$ be given by the function $U(x,t)$. Units are assumed to be indivisible, and
therefore the domain of the utility function $U(x,t)$ is $\mathbb{N} \times [T]$.
Assume the buyer faces a set of prices $\{p_1\hdots, p_T\}$ and that the
storage cost is $c\ge 0$ dollars per unit per time period.
The consumer problem then consists of choosing
the consumption quantities $\{x_1,x_2,\hdots,x_T\}$, the number of units purchased per period $\{q_1,\hdots, q_T\}$,
and the storage levels $\{S_1, S_2,\hdots, S_T\}$.
These are chosen to maximize her total utility
\begin{equation}
\sum_{t=1}^{T} \left( U(x_t,t) -p_t\cdot q_t - c \cdot S_t\right)
\end{equation}\label{buyers_problem}

subject to the storage constraints that $S_t = S_{t-1} + q_{t} - x_t$, for all $t=1,\hdots,T$ where $S_0=0$.

The retailer selects the prices $\{p_1\hdots,p_T\}$ in order to maximise the resulting profits $\sum_{t=1}^{T} p_t\cdot q_t$.
As stated, for the preannounced price mechanism the prices are chosen in advance and
for the contingent pricing mechanism the prices are chosen sequentially.
In analyzing these two mechanisms we make the following standard assumptions.
First we assume the buyer's valuation function $U(x,t)$ is non-decreasing in $x$.
Second, we assume the marginal valuation function is non-increasing in $x$.
Specifically, let $V(x,t) = U(x,t)-U(x-1,t)$ represent the marginal value obtained by consuming $x$ units of the good at
period $t$. Then we have that $V(x,t)$ is non-increasing in $x$. We also assume that there exists
$H \in \mathbb{N}$ such that $V(H,t)=0$ for all $t \in T$. No other restrictions are
imposed on the consumer utilities. For notational purposes, we set $V(0,t)=L$ for all $t\in[T]$ where $L$ is a huge number.

\subsection{The Multi-Buyer Case.}
Next consider the set-up for the multi-buyer case. We now have a set of $N$ consumers each seeking
to maximize their own utility.
Each consumer $i$ has a demand of at most one unit per period and the value she obtains if she
consumes the unit at period $t$ is $u_{i,t}$.
A consumer $i$ may, however, purchase multiple units $q_{i,t}$ of the indivisible good in any time period and
store them for future consumption. Again, there is a storage cost of $c$ per period and per unit stored.
This time the retailer chooses prices $\{p_1,\hdots,p_T\}$
to maximize $\sum_{t=1}^Tp_t \cdot q_t$, where $q_t=\sum_i q_{i,t}$ is the total number of items sold in period $t$.

For the preannounced price mechanism, the consumers make their purchasing/storage decisions based
upon the entire set of prices. For the contingent pricing mechanism, in each time period, the
consumers decide (simultaneously) their purchasing/storage quantities for that period, given the current price.
As we shall see, these decisions impose constraints on the retailer's optimization problem.
In particular, for the contingent pricing mechanism these constraints force solutions to correspond to
pure subgame perfect Nash equilibria (SPNE).

%Here again the units are
%indivisible and therefore the purchasing levels and consumptions of each consumer are round numbers.
%If a consumer buys one or multiple units and desires to keep some of them for future consumption there is
%a storage cost $c$ per period and per unit stored.
To simplify notation, let $\sigma_t: [N] \rightarrow [N]$
denote the permutation that ranks the $N$ consumers in decreasing values of their utilities in period $t$
and let $v_{i,t} = u_{\sigma^{-1}(i),t}$ for all $i \in [N]$ and all $t \in [T]$. Again, we set $v_{0,t}= L$ where $L$ is a huge number.

%The total utility obtained by consumer $i$ who buys $x_t$ units at period $t$ for a price $p_t$ is
%$$\sum_{t=1}^{T} u(i,t) \cdot x_t - p_t \cdot q_t - C \cdot S_t$$
%
%For the price-commitment mechanism, the monopolist has to solve a constrained optimization problem.
%For the threat mechanism
%As in the single-buyer case the solutions we examine are pure subgame perfect Nash equilibria (SPNE).
%For the multi-buyer case the price $p(\mathcal{H},t)$ announced also may depend not only on the current
%time period $t$, but also on the past history.
%Note however, that the identities of the consumers who buy are not known to the monopolist, nor are their consumption levels.

%difference is that in their paper, the authors consider the case there is a continuum
%of customers and the utility distribution function it assumed to respect some properties,
%while in ours we have a finite set of buyers.

%{\bf [Move these definitions?...]}

\subsection{An Example}\label{sec:bad-example}
In this section, we present a simple example to illustrate the proposed model and the concepts involved. Then, in Section \ref{section_cont_example}, we consider the continuous version of the same example (i.e. the case in which the goods are divisible) to provide an insight of why the indivisibility of the goods alters some of the results obtained by \citet{dudine2006storable}.

Consider a two-period game with a per unit storage cost of $c=1$ and with consumer valuations given
in Table \ref{ExThreat}. Here we assume that we are in the multi-buyer case with $N=2$; however, it is straightforward to check that our analysis also applies if
there is a single-buyer whose marginal utilities for each extra unit are described by the same
table.
\begin{table}[ht]
\centering
\caption{A small example}
\begin{tabular}{|c||c|c|}
\hline
\textbf{Consumer} & \textbf{Value at t=1}  &\textbf{Value at t=2}  \\
\hline
 \textbf{1}  & 17 & 15  \\
\hline
 \textbf{2} & 10 & 4  \\
\hline
\end{tabular}
\label{ExThreat}
\end{table}

%\begin{displaymath} \left \{ \begin{array}{ll} \displaystyle 17 \\ 10 \end{array} \right \end{displaymath}

%\begin{table}[ht]
%\centering
%\caption{Example: divisible demand}
%\begin{tabular}{||c|c|}
%\hline
%\textbf{Value at t=1}  &\textbf{Value at t=2}  \\
%\hline
%\begin{displaymath}
%$u(x)$=\left \{ \begin{array}{ll}  \displaystyle 17 & \quad \textrm{if $0 \leq x \leq 0.5$} \\[3ex]
%10 & \quad \textrm{if $0.5 < x \leq 1$}
%\end{array} \right.
%\end{displaymath} & \begin{displaymath}
%$u(x)$=\left\{ \begin{array}{ll}  \displaystyle 15 & \quad \textrm{if $0 \leq x \leq 0.5$} \\[3ex]
%4 & \quad \textrm{if $0.5 < x \leq 1$}\\
%\end{array} \right.
%\end{displaymath}  \\
%\hline
%\end{tabular}
%\label{ExThreat}
%\end{table}

We first calculate the optimal {\em preannounced pricing strategy} for this game. Let $p_1$ and $p_2$ denote
an optimal sequence of prices under the preannounced pricing policy for periods 1 and 2 respectively. We claim that no
consumer will ever buy a unit at period 1 and then store it for consumption in period 2. Otherwise
we must have $p_2 > p_1 + 1$ for it to be beneficial to buy early and pay the storage cost.
But this cannot be optimal for the retailer as it would then benefit from
setting $p_2=p_1+1$. It follows that the optimal choice for $p_1$
must either be $p_1=17$ in which case only consumer $1$ buys, or be $p_1=10$ in which case
both consumers buy. In the former case, the optimal choice for $p_2$ would be $p_2=15$ yielding a total
profit of $17 \cdot 1 + 15 \cdot 1 = 32$. In the latter case, the optimal period two
price $p_2$ cannot be higher than $11$, otherwise consumer $1$ would use storage.
Consequently,  the optimal price is $p_2 = 11$, which yields the retailer a profit of $10 \cdot 2 + 11  \cdot 1 = 31$.
Therefore the optimal preannounced price strategy yields a profit of $32$ and consists of
setting prices $p_1=17$ and $p_2=15$.

Next we calculate the optimal {\em contingent pricing mechanism}.
Because prices and actions are chosen sequentially we can compute the optimal contingent pricing mechanism by reasoning backwards from the final time period.
So consider time $t=2$.
In order to select $p_2$ the retailer will base its decision on the set of items
currently stored. If consumer $1$ has not stored an item from period $1$ to $2$
then the retailer will set $p_2=15$, regardless of the prior decisions made by consumer $2$.
Knowing this, consumer $1$ will purchase two items in period $1$ provided
$p_1+c \le 15$, that is $p_1\le 14$.
This observation leads to the following four possibilities. (i) The retailer
sets $p_1=17$. Consumer $1$ then purchases one item in each time period
giving a profit of $17 \cdot 1 + 15 \cdot 1 = 32$.
(ii) The retailer sets $p_1=14$ and consumer $1$ purchases two units.
Given this, the retailer then charges $p_2=4$ and consumer $2$
buys one unit in the last period. This gives a profit of $14 \cdot 2 + 4\cdot 1 = 32$.
(iii) The retailer sets $p_1=10$ and consumer $1$ purchases two units and consumer $2$
purchases one unit. Again, the retailer then charges $p_2=4$ and consumer $2$
buys one unit in the last period. This gives a profit of $10 \cdot 3 + 4\cdot 1 = 34$.
(iv) In order to induce consumer $2$ to purchase two units in period $1$ the
retailer will have to charge $p_1\le 3$ giving profits of at most $3\cdot 4=12$.
Note that option (iii) is a credible threat. If consumer $1$ refuses to buy two units in period
$1$ her utility will fall as the price would then rise to $p_2=15$.
If consumer $2$ refuses to buy one unit in period
$1$ then her utility will be unchanged. Thus, the optimal contingent pricing mechanism
gives total profits of $34$.

We now make some observations about the example described. First, observe that the equilibrium under the contingent pricing mechanism involves storage whereas the preannounced pricing solution does not. This observation is in fact general in our model (and it also holds under the model of \citet{dudine2006storable}). We prove in Section~\ref{sec:price_commitment} that regardless of the number of time periods,
the valuation functions, or whether the game is single-buyer or multi-buyer,
there is an optimal preannounced pricing strategy that induces no
storage. Thus, in terms of storage costs, the preannounced pricing mechanism is
more efficient that the contingent pricing mechanism.
Second, the prices under the preannounced pricing policy are higher than under the contingent pricing mechanism in each period. Observe as well that the retailer profits are lower under preannounced pricing ($32$) than contingent pricing ($34$). Finally, the consumer surplus (the total value obtained by consumers minus money transfers to the retailer and storage costs)
is lower under the preannounced pricing policy ($0$) than with the contingent pricing mechanism ($11$). The last three observations are surprising
as \citet{dudine2006storable} prove that none of those three statements can be true if the retailer is selling a divisible storable good. The following section provides insights about why it makes a difference whether the goods are divisible or indivisible.

\subsection{The Impact of Item Divisibility} \label{section_cont_example}
In order to provide an insight of what is the impact of good divisibility, we consider the model of \citet{dudine2006storable} and analyze an example that corresponds to the one described in the previous section. In \citet{dudine2006storable} model, there is a unit measure of consumers, and the value of consuming an amount $x \in [0,1]$ at period $t$ is $u_t(x)$, where $u_t(x)$ is an increasing and continuous function in $[0,1]$. For this example, the valuation functions $u_t(x)$ are obtained by setting $u_t(x)= \int_0^x U_t(x) dx$ where the functions $U_t(x)$ are defined as follows:

\begin{displaymath}
U_{t=1}(x) = \left\{ \begin{array}{ll}  \displaystyle 17  & \quad \textrm{if $0 \leq x \leq 0.5$} \\[3ex]
10 & \quad \textrm{if $0.5 < x \leq 1$}\\
\end{array} \right.
\end{displaymath}

\begin{displaymath}
U_{t=2}(x) = \left\{ \begin{array}{ll}  \displaystyle 15  & \quad \textrm{if $0 \leq x \leq 0.5$} \\[3ex]
4 & \quad \textrm{if $0.5 < x \leq 1$}\\
\end{array} \right.
\end{displaymath}

%Table \ref{table_dudine_example} describes the corresponding example.
%{\red Note that this corresponding example fails, for example, to satisfy the continuity property.}
%NOTE to Adrian: Dudine model cares about the continuitity of the function u_tx(x) which is the integral of U_t(x). The integral it IS continuous (but not differentiable though in one point). This doesn't change things as one can make a differentiable function as close as one wants to this one.

Under the {\em preannounced pricing policy} it easy to verify that the optimal prices and quantities bought are the same as in the divisible good settings. Specifically, at period 1, $p_1 = 17$, and a measure of 0.5 consumers buy and consume; and at period 2, $p_2=15$ and a 0.5 measure of consumers from period 2 buy and consume. Again here, no storage is used, and the retailer revenue is $\frac{17}{2} + \frac{15}{2} = 16$ \footnote{The reason the profit is not 32 as in the case where goods are not divisible is simply because of the scaling: the demand in this example is scaled to a unit per period, which is half of the example from section \ref{sec:bad-example}}.

Next consider {\em contingent pricing}. We now show that the price contingent equilibrium we found
under the indivisible model cannot be maintained in the \citet{dudine2006storable} model. The potential
equilibrium that corresponds to the equilibrium for the divisible good model is that at $t=1$, $p_1$=10 and
all the complete unit measure of consumers from $t=1$ buy, as well as 0.5 measure of consumers from period 2
(which they store for later consumption at period 2). Then, at $t=2$, we have $p_2=4$ and the remaining 0.5 measure of
consumers buy. The reason why this is not an equilibrium is because the measure of period 2 consumers who
store the good from $t=1$ to $t=2$ is so large that the retailer will then set a price $p_2=4$.
To understand this, let $y \leq 0.5 $
denote the measure of consumers who buy at $t=1$ for price 10 and store the good for $t=2$. Then,
the price at $t=2$  would still be 4 if and only if the retailer obtains more profit by charging 4 dollars to
the remaining measure consumers instead of charging 15 dollars to the remaining measure of
consumers whose valuation is 15. Therefore, we have that
$4\cdot(1-y) \geq 15 \cdot(0.5-y)$.
Solving this gives $y \geq \frac{7}{22}$.

Thus, if the retailer sets price $p_1=10$ at $t=1$, all consumers from period 1 buy and a measure of $\frac{7}{22}$ consumers of $t=2$ buy and store. Then, at period $t=2$, the retailer sets price $p_2=4$ and all remaining consumers buy. The retailer revenue under this specific contingent pricing policy is then $10\cdot(1+\frac{7}{22}) + 4\cdot \frac{15}{22} = \frac{175}{11} \simeq 15.91 < 16$. It can be shown that the optimal contingent pricing strategy is to set $p_1=17$ and $p_2=15$ so the retailer revenue under contingent pricing in this example is the same as the preannounced pricing policy (see \citet{dudine2006storable} for the general characterization of both pricing policies under this model). Table \ref{ExThreat} summarizes the differences between the two models.

\begin{table}[ht]\label{summary_difference}
\centering
\caption{Differences between Divisible and Indivisible Goods: An Example}
\begin{tabular}{|c||c|c|c|}
\hline
\textbf{Mechanism-Model} & \textbf{Retailer Profit}  &\textbf{Consumer surplus} & \textbf{Prices} \\
\hline
 \textbf{PA/Indivisible} & 32 & 0 & $p_1=17$ $p_2=15$ \\
\hline
\textbf{CP/Indivisible} & 34 & 11 & $p_1=10$ $p_2=4$ \\
\hline
\textbf{PA/Divisible}  & 16 & 0 & $p_1=17$ $p_2=15$  \\
\hline
\textbf{CP/Divisible}  & 16 & 0 & $p_1=17$ $p_2=15$  \\
\hline
\end{tabular}
\label{ExThreat}
\end{table}

In essence; using a contingent pricing policy; the retailer can sometimes extract more revenue when the goods are indivisible by performing some price discrimination that the consumer cannot overcome (under the indivisible model one unit was bought at 10, whereas the other unit was bought at 4). If the retailer tries to perform a similar price discrimination under the divisible goods setting, the consumer can finely optimize the (continuous) quantities bought, which makes the price discrimination less effective.
%Table \ref{summary_difference}
% shows the retailer revenue,  consumer surplus and the prices for each of the two pricing policies under both models.

\section{Optimal Preannounced Pricing Mechanisms} \label{sec:price_commitment}
In this section we derive an efficient procedure to calculate an optimal
sequence of prices $\{p_1,\hdots,p_T\}$ (price at period $t$ is $p_t$) that the
retailer could announce to maximize its revenue under preannounced pricing. %i.e. $\Pi^{PA}(G)$.
Furthermore, the optimal price sequence is the same regardless of whether we are in the
single-buyer or multi-buyer case.

%{\bf [In what follows, clean up the distinction between these two cases....]}

We begin with the following simple lemma that holds for the multi-buyer and the single buyer settings.
\begin{lemma}\label{obs_inequality}
Let $\{p_1,\hdots,p_T\}$  be any sequence of preannounced prices. If sales at period $t$ are positive
then $p_t$ is finite and $p_t \leq p_{t'} + (t-t') \cdot c$ for all $t'<t$.
\end{lemma}
\begin{proof}
We prove this by contradiction. Suppose that at period $t$ there are positive sales (i.e. $q_t>0$) but it holds that $p_t > p_{t'} + (t-t') \cdot c$ for some $t'<t$. We consider the single-buyer case. Following (1), the consumer utility is $U = \sum_{\tau=1}^{T} \left( U(x_{\tau},\tau) -p_{\tau}\cdot q_{\tau} - c \cdot S_{\tau}\right)$ where $x_{\tau}$ denotes the items consumed in period $\tau$ and $S_{\tau} = S_{\tau-1} + q_{\tau} - x_{\tau}$ represents the number of items stored from period $\tau$ to period $\tau+1$. Suppose now that the consumer modifies her strategy composed of consumption decisions $x'_1,\hdots,x' _T$; purchasing decisions $q'_1,\hdots,q'_T$, which in turn leads to an update of the storage values $S'_1,\hdots,S'_T$. First, we leave the consumption decisions the same they were before: $x'_{\tau} = x_{\tau}$ for all $\tau=1,\hdots,T$. Second, we modify the purchasing decisions by setting $q' _{t' } = q_{t'} + q_{t}$; $q'_t = 0$ and maintaining the same purchase quantities on the remaining periods: $q' _{\tau} = q_{\tau}$ for all $\tau=1,\hdots,T$ with $\tau \neq t'$ and $\tau \neq t$. As a result of this change, the storage quantities also change as follows: $S'_{\tau} = S_{\tau}$ for all $\tau=1,\hdots,t' -1$, $S'_{\tau} = S_{\tau} + q_{t}$ for all $\tau=t',\hdots,t -1$, and $S'_{\tau} = S_{\tau}$ for all $\tau=t,\hdots,T$. In words, this change consists of bringing forward the period $t$ purchases to period $t'$. The utility obtained using the new strategy is
\begin{eqnarray*}
 U' &=& \sum_{\tau=1}^{T} \left( U(x_{\tau},\tau) -p_{\tau}\cdot q'_{\tau} - c \cdot S'_{\tau}\right)\\
 &=& \sum_{\tau=1}^{T} \left( U(x_{\tau},\tau)\right)  - \sum_{\tau=1}^{t'-1}  \left(p_{\tau}\cdot q_{\tau} + c \cdot S_{\tau}\right) - \sum_{\tau=t'}^{t}  \left(p_{\tau}\cdot q'_{\tau} + c \cdot S'_{\tau}\right) \\
 & & - \sum_{\tau=t+1}^{T}  \left(p_{\tau}\cdot q_{\tau} + c \cdot S_{\tau}\right)
\end{eqnarray*}
Where the second equality follows because $q'_{\tau}=q_{\tau}$ and $S'_{\tau}=S_{\tau}$ for all $\tau=1,\hdots,t'-1, t+1,\hdots,T$.

We will now show that $U'>U$. Indeed:

\begin{eqnarray*}
U' - U &=& \sum_{\tau=t'}^{t}  \left(p_{\tau}\cdot q_{\tau} + c \cdot S_{\tau}\right) - \sum_{\tau=t'}^{t}  \left(p_{\tau}\cdot q'_{\tau} + c \cdot S'_{\tau}\right) \\
&=& p_{t'}\cdot q_{t'} + c \cdot S_{t'} + \sum_{\tau=t'+1}^{t-1}  \left(p_{\tau}\cdot q_{\tau} + c \cdot S_{\tau}\right) + p_{t}\cdot q_{t} + c \cdot S_{t}  \\
& & - \left(p_{t'}\cdot q'_{t'} + c \cdot S'_{t'} \right) - \sum_{\tau=t'+1}^{t-1}  \left(p_{\tau}\cdot q'_{\tau} + c \cdot S'_{\tau}\right) - \left(p_{t}\cdot q'_{t} + c \cdot S'_{t} \right) \\
&=& p_{t'}\cdot q_{t'} + c \cdot S_{t'} + \sum_{\tau=t'+1}^{t-1}  \left(p_{\tau}\cdot q_{\tau} + c \cdot S_{\tau}\right) + p_{t}\cdot q_{t} + c \cdot S_{t}  \\
& & - \left(p_{t'}\cdot (q_{t'}+q_{t}) + c \cdot (S_{t'} + q_{t})  \right) - \sum_{\tau=t'+1}^{t-1}  \left(p_{\tau}\cdot q_{\tau} + c \cdot (S_{\tau}+ q_{t}) \right) - \left(p_{t}\cdot 0 + c \cdot S_{t} \right) \\
&=& -p_{t'}\cdot q_{t} - c \cdot q_t - \sum_{\tau=t'+1}^{t-1}  \left(c \cdot q_{t} \right) + p_{t}\cdot q_{t} \\
&=& q_t \left( p_t - p_{t'} - (t- t') \cdot c\right) \\
&>& 0.
\end{eqnarray*}

Above, the third equality is obtained by replacing the new purchased quantities $q'$ and the new storage quantities $S'$ as a function of the original strategy. Finally, the strict inequality at the end is due to our assumption that $p_t > p_{t'} + (t-t') \cdot c$.
This is a contradiction, as the consumer would have obtained a strictly higher utility $U' $ by switching to the new proposed strategy.

We note that the argument used to prove the lemma also applied to the multiple buyer case. Indeed, in that setting each of the consumers who bought at period $t$ can be better off by moving forward the purchase to period $t'$ for exactly the same reasons.
\end{proof}

%\begin{equation}
%\sum_{t=1}^{T} \left( U(x_t,t) -p_t\cdot q_t - c \cdot S_t\right)
%\end{equation}\label{buyers_problem}

The following lemma shows that under preannounced pricing there is no wasteful storage. This ``storage-proof" property
of the preannounced pricing mechanism extends the result of \citet{dudine2006storable} to the case of
indivisible demand.

\begin{lemma}\label{lemma_buy_on_time}
There always exists a revenue maximising preannounced pricing strategy in which there is no storage. This
pricing strategy is the same for the single-buyer and the multiple-buyer settings.
\end{lemma}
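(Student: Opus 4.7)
My plan is to take an arbitrary revenue-maximising preannounced price sequence $(p_1,\ldots,p_T)$ together with its induced consumer equilibrium (which may involve storage) and explicitly exhibit an equivalent, storage-free price sequence. The guiding idea is that a strategic consumer only cares about the minimum total out-of-pocket cost of having a unit available in period $t$; under the original prices this quantity is
\[
\tilde{p}_t \;=\; \min_{1\leq s\leq t}\bigl(p_s + (t-s)\cdot c\bigr).
\]
I would announce $\tilde{p}_t$ directly in each period and verify that this new sequence (i) induces no storage, (ii) leads to the same consumption in every period, and (iii) yields at least as much revenue.

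Step (i) is the telescoping inequality $\tilde{p}_t \leq \tilde{p}_s + (t-s)c$ for $s<t$, immediate from the definition, which says that under prices $\tilde{p}$ buying at $s$ and storing is never strictly cheaper than buying at $t$. Step (ii) uses that in either setting a unit is consumed in period $t$ iff its marginal value (either $V(k,t)$ or $u_{i,t}$) is at least the minimum per-unit cost of obtaining it for that period; this cost is $\tilde{p}_t$ both under $p$ and under $\tilde{p}$, so the same units are consumed in the same periods. Step (iii) is bookkeeping: any unit consumed in period $t$ was originally bought at some period $s_\star\leq t$ with $p_{s_\star}+(t-s_\star)c=\tilde{p}_t$, contributing $p_{s_\star}$ to the retailer's revenue, whereas under $\tilde{p}$ that same unit is bought at time $t$ and contributes $\tilde{p}_t = p_{s_\star}+(t-s_\star)c\geq p_{s_\star}$. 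The retailer essentially absorbs the storage cost that was previously wasted. By optimality of $(p_t)$, total revenue must be preserved, so $(\tilde{p}_t)$ is itself an optimal storage-free preannounced strategy.

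For the second claim, once some optimum is known to use no storage, the retailer's problem reduces to maximising $\sum_t p_t\cdot D_t(p_t)$ subject to the no-storage constraints $p_t\leq p_s+(t-s)c$ for all $s<t$, where $D_t(\cdot)$ denotes the period-$t$ demand curve. Under the identification $V(k,t)=v_{k,t}$ used in the paper's examples, the two settings share identical demand curves, so the two optimisation problems coincide and admit the same optimal price sequence. The step I expect to be most delicate is (ii): in the multi-buyer case many strategic agents respond simultaneously, and the substitution $p\to\tilde{p}$ must preserve every individual purchase decision so that the unit-by-unit revenue comparison in (iii) remains airtight. The key observation --- that a rational consumer cares only about the effective per-unit cost of having a unit available at her consumption period, not about when she physically buys it --- does the work, but has to be applied uniformly across all consumers and periods.
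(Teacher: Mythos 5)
Your proof is correct, but it takes a genuinely different route from the paper's. The paper argues by iterated local repair: it locates the \emph{first} period $t$ in which storage occurs, observes that rationality forces $p_{t+1}>p_t+c$ and hence (by Observation~1) $q_{t+1}=0$, lowers $p_{t+1}$ to $p_t+c$, checks that this weakly increases revenue while strictly reducing storage, and repeats until no storage remains. You instead perform a single global substitution, replacing each price by the \emph{effective price} $\tilde{p}_t=\min_{s\leq t}\bigl(p_s+(t-s)c\bigr)$, and verify in one pass that the new schedule is storage-proof (via the telescoping inequality $\tilde{p}_t\leq\tilde{p}_s+(t-s)c$), consumption-preserving (because the minimum per-unit cost of having a unit available in period $t$ is $\tilde{p}_t$ under both schedules, and with linear storage costs and unit-separable demand this threshold fully determines consumption for each consumer and each marginal unit), and revenue-non-decreasing (the retailer pockets the storage cost $\,(t-s_\star)c\,$ that was previously burned). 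Your construction is arguably cleaner: it yields an explicit closed form for a storage-free optimum and avoids the termination bookkeeping of the iterative argument; what it requires in exchange is the more delicate global verification in your step (ii), including the same indifference tie-breaking convention (consumers choose minimal storage when costs are equal) that the paper also invokes. Your handling of the second claim --- that with no storage the retailer's problem in both settings reduces to the identical program $\max\sum_t p_t D_t(p_t)$ under the constraints $p_t\leq p_s+(t-s)c$, with $D_t$ determined by $v_{\cdot,t}=V(\cdot,t)$ --- is essentially the observation the paper defers to its integer-programming formulation following the lemma, so no gap there either.
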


\begin{proof}
Take any optimal preannounced pricing strategy $\{p_1,\hdots,p_T\}$ in which the consumer (or consumers)
use the storage option. We show that there is another optimal  preannounced pricing strategy
that requires less storage. The theorem will follow.
Let $t$ be the first time period where $q_t>x_t$. Thus a consumer stores at least one unit of the good
between period $t$ and period $t+1$.
We may assume in the case of a tie that the consumer(s)
selects an option with the least storage.
By the rationality of each consumer, we then have
that $p_{t+1}> p_t + c$. By Lemma~\ref{obs_inequality}, we then have that $q_{t+1}=0$.
Now suppose the retailer modifies its preannounced policy by setting $p_{t+1}  = p_{t} + c$.
Observe first that consumption in periods $1$ to $t$ will not change. Second, in the single-buyer case
the consumer now prefers
to purchase $q_t-x_t$ units in period $t+1$ rather than buying them earlier and storing them. In the
multi-buyer case there are $q_t-x_t$ consumers who previously bought a unit to store at period $t$ and
now prefer to purchase it in period $t+1$.
Thirdly, the purchases made from period $t+2$ are unaffected by this price change; if it is now preferable to buy
additional units in period $t+1$ then each consumer would have originally benefited by making such purchases
in period $t$. This modified pricing strategy cannot decrease the retailer profits (if $c>0$ then profits strictly increase).
\end{proof}

%%%%%%%%%%%%%%%%%%%%%%%%%%%%%%%%%%%%%%%%%%%%%%%%%%%%

%The following corollary directly follows from the previous lemma.
%\begin{corollary}\label{corollary_inequality}
%Let $p_1,\hdots,p_T$  be an optimal sequence of pre-commitment prices. If sales at period $t$ are positive
%then $p_t$ is finite and $p_t \leq p_{t'} + (t-t') \cdot C$ for all $t'<t$.
%\end{corollary}
%\begin{proof}
%By Lemma \ref{lemma_buy_on_time} we know that for any revenue maximising sequence of prices under pre-commitment, if
%the (some) consumer buys an item, then all those players buy at time period $t$. Thus, such a price sequence
%must satisfy that $p_t \leq p_{t'} + (t-t') \cdot C$ whenever some players in $A_t$ buy. Because no player will
%buy in advance under an optimal price commitment strategy, if no player in $A_t$ buys, then the price announced
%is equivalent to $\infty$.
%\end{proof}

Now recall that in the multiple-buyer setting, $v_{i,t}$ is the value of the $i^{th}$ consumer (ranked in decreasing
order of their valuation in time period $t$). In the single-buyer setting, in order to unify notation, we let $v_{i,t} = V(i,t)$
(that is the marginal value of consuming $x$ units of the good at period $t$) and let the value of $H \in \mathbb{N}$
such that $V(H,t)=0$ for all $t \in T$ be equal to $N+1$.

Lemma \ref{lemma_buy_on_time} allows us to write the retailer's problem as follows. The formulation is valid
for the single-buyer model as well as for the multiple-buyer setting.
\begin{eqnarray*}
\max\ \ \sum_{t=1}^T p_t \cdot q_t &&\\
s.t.\hspace{2cm}
p_t &\leq& v_{q_t,t}  \hspace{.25cm} \forall t \\
p_{t} &\le& p_{t'} + (t-t')\cdot c \hspace{.25cm} \forall t'<t   \hspace{.5cm} \mathrm{{\bf if}}\  q_t>0 \\
%p_{t} &=& L   \hspace{.5cm} \mathrm{{\bf if}}\  q_t=0 \\
q_t &\in& \mathbb{Z}^+ \hspace{.5cm} \forall t
\end{eqnarray*}
Here $q_t$ is a non-negative integer representing the number of items sold at time $t$.
For the consumer to desire this amount, the price at time $t$ must be at most
the corresponding marginal value -- this gives the first constraint.
The second constraint
states that if there are sales in period $t$ then the price $p_t$ cannot be more than the total cost of purchasing
a good in an earlier period and storing it until time $t$.
Recall that we set $v_{0,t}=L$. Thus, if $q_t=0$ we may set $p_t=L$; this cannot decrease
the objective function as it can only weaken the restrictions imposed by the second set of constraints.

%In what follows, we will show that there is a polynomial-time algorithm to compute the optimal
%pre-commitment prices for the monopolists. The algorithm works for both settings.
The following theorem is crucial for the construction of the polynomial-time algorithm to compute an optimal preannounced pricing strategy.

\begin{theorem}\label{theorem_1}
There exists optimal preannounced pricing sequence $p_1,\hdots, p_T$ such that
for each $t$, we have $p_t = v_{j,s} + (t - s)\cdot c$ for some $s \leq t$ and some $0\le j\le N$.
\end{theorem}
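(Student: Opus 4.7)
The plan is to take any optimal solution $(p_1,\dots,p_T,q_1,\dots,q_T)$ of the retailer's program and construct, with the same quantities $q_t$, a new price sequence $(\tilde p_1,\dots,\tilde p_T)$ in which each $\tilde p_t$ is pushed up to the largest value compatible with the constraints of the retailer's program. Because the reservation inequality $\tilde p_t \le v_{q_t,t}$ and the storage inequalities $\tilde p_t \le \tilde p_{t'}+(t-t')c$ (for $t'<t$, when $q_t>0$) are the only binding ones, the maximum must equal either $v_{q_t,t}$ itself or $\tilde p_{t'}+(t-t')c$ for some earlier $t'$. Unrolling the second case by induction on $t$ will produce exactly the form $v_{j,s}+(t-s)c$ required by the theorem.

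Concretely, processing $t=1,2,\dots,T$ in order, I would set
\begin{equation*}
\tilde p_t \;:=\; \begin{cases} v_{0,t} & \text{if } q_t=0,\\[2pt] \min\Bigl\{\,v_{q_t,t},\ \min_{t'<t}\bigl(\tilde p_{t'}+(t-t')c\bigr)\Bigr\} & \text{if } q_t>0. \end{cases}
\end{equation*}
Feasibility is automatic from this definition. A short induction shows $\tilde p_t \ge p_t$: the original $p_t$ is feasible against the original earlier prices $p_{t'}$, and since $\tilde p_{t'} \ge p_{t'}$ for every $t'<t$ by the inductive hypothesis, every storage bound in the new system is only weaker, so $p_t$ remains feasible and the maximum $\tilde p_t$ can only be larger. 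Hence $\sum_t \tilde p_t q_t \ge \sum_t p_t q_t$, so the modified sequence is itself optimal. (In the multi-buyer case $q_t\le N$ holds automatically; in the single-buyer case any $q_t > N$ forces $v_{q_t,t}=0$ and zero revenue in period $t$, so we may freely assume $q_t\le N$.)

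The structural claim is then a second, equally short induction on $t$. If $q_t=0$ we have $\tilde p_t = v_{0,t} = v_{0,t}+(t-t)c$, matching the target with $s=t$ and $j=0$. If $q_t>0$, the minimum defining $\tilde p_t$ is attained either at $v_{q_t,t}$, giving the form with $s=t$ and $j=q_t$, or at $\tilde p_{t'}+(t-t')c$ for some $t'<t$; in the latter case the inductive hypothesis writes $\tilde p_{t'}=v_{j',s'}+(t'-s')c$ with $s'\le t'$ and $0\le j'\le N$, whence $\tilde p_t = v_{j',s'}+(t-s')c$, again in the desired form. The only place that needs any real care is the monotonicity step $\tilde p_t \ge p_t$; but since raising earlier prices only loosens the storage constraints binding later periods, the induction closes without surprises, and the theorem follows.
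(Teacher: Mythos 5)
Your construction is correct, and it rests on the same core observation as the paper's proof: an optimal price can be pushed up until it meets a binding constraint, and the binding constraints --- either $v_{q_t,t}$ or an earlier price plus accumulated storage cost --- have the form $v_{j,s}+(t-s)\cdot c$ by induction on $t$. The execution differs in a way worth noting. The paper starts from an optimal no-storage solution and does an iterative repair: at the first period $t$ whose price is not of the required form it raises $p_t$ to the \emph{smallest} candidate $v_{j,s}+(t-s)\cdot c$ strictly above $p_t$, and then argues directly that consumer behaviour is unchanged (earlier and later periods are unaffected, the period-$t$ buyers still buy because $\hat p_t\le v_{q_t,t}$, and no storage is triggered because $\hat p_t\le p_s+(t-s)\cdot c$). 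You instead freeze the quantities $q_t$, make one forward pass, and set each $\tilde p_t$ to the \emph{maximum} price feasible for the retailer's program given the already-raised earlier prices; both the monotonicity $\tilde p_t\ge p_t$ and the structural form then follow by short inductions. What your route buys is brevity and a single pass instead of an unbounded ``iterate the argument'' loop; what it costs is that the optimality of $\tilde p$ as an actual pricing sequence (rather than merely as a feasible point of the program with large objective value) is delegated entirely to the claim that the displayed program is an exact reformulation of the retailer's problem --- in particular to the tie-breaking convention that indifferent consumers do not store, which matters here because your construction makes the storage inequalities tight. The paper asserts that reformulation when it writes the program down (via Lemma~\ref{lemma_buy_on_time}), so your argument is at the same level of rigour, but the dependence deserves an explicit sentence.
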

\begin{proof}
%We proceed by induction in $t$. Consider first $t=1$, and suppose for the purpose of contradiction that
%$p_1 \neq v_{i,1}$ for all $i\in [0,n_1]$. By Lemma \ref{lemma_buy_on_time} only players in $A_1$ would
%buy at $t=1$ and therefore if the price at $t=1$ is increased it will not create any incentive for players in $A_i$,
%with $i>1$ to buy at $t=1$. If $p_1>v_{1,1}$, then no player will buy and therefore it is equivalent to set the price
%to $p'_1 = v_{0,1} = \infty$. If the price is between the value of two players, i.e. $v_{i+1,1} < p_1 < v_{i,1}$, then
%if monopolist increases the price to $p'_1 = v_{i,1}$, the same set of players in $A_1$ will buy and therefore the
%monopolist revenue would be larger.
%Suppose now that the Theorem holds true for all time periods $i<t$.
Take an optimal sequence of preannounced pricing sequence $(p_1,\hdots, p_T)$ that do not require storage.
If these do not satisfy the requirements of the theorem then take
smallest time $t$ such that $p_t \neq v_{j,s} + (t - s)\cdot c$ for any $s \leq t$ and for any $j \leq N$.
Now, if $q_t=0$ then set $\hat{p}_t = v_{0,t}=L$. It is easy to verify that
the new prices $\{p_1,\dots, \hat{p}_t,\dots, p_T\}$ induce the same behaviour by the consumer or consumers
and, thus, produce the same revenue.
Moreover, the period $t$ now satisfies the requirements
of the theorem for $s=t$ and $j=0$.
So suppose $q_t\ge 1$. We now set
$$\hat{p}_t = \min \{v_{j,s} + (t - s)\cdot c : s \leq t;  0\le j\le N; v_{j,s} + (t - s)\cdot c > p_t\}$$
We claim  $(p_1,\dots, \hat{p}_t,\dots, p_T)$ is a more profitable sequence of preannounced prices. To do this we again desire that the consumer (or consumers) behaviour does not change.
If so, the price increase at time $t$ will result in additional revenue.
First, the price increase at time $t$ will not affect consumption by the consumer(s)
in any period $t'<t$. Second the price increase at time $t$ will not alter consumption in periods
$t''>t$; making additional purchases at time $t$ and
storing them for future consumption is even less desirable than before.
It remains to consider sales in period $t$.
 Recall there were originally $q_t$ sales in period $t$.
 Because there was no storage, it must, by consumer rationality, be the case
 that $v_{q_{t+1},t} < p_t \le v_{q_{t},t}$.

Thus, $p_t < v_{q_{t},t}$, otherwise $p_t$ does satisfy the requirements of the theorem.
It follows that $v_{q_{t},t}$ was a feasible option when selecting $\hat{p}_t$.
Hence, $\hat{p}_t\le v_{q_{t},t}$.
Thus the price of the item at time $t$ is still at most the marginal value the $q_t^{th}$ unit (in the multi-buyer
case, $\hat{p}_t$ is at most the value at period $t$ of the consumer with the $q_t^{th}$ highest valuation in that period).
But does the price increase from $p_t$ to $\hat{p}_t$ now induce the consumer or consumers to now purchase early
and store for consumption at time $t$? This is not the case. Since $q_t>0$, we know by Lemma~\ref{obs_inequality}
that $p_t \leq p_{s}+ (t - s)\cdot c$ for all $s<t$. Since by assumption $p_t \neq v_{j,s} + (t - s)\cdot c$ for any $s \leq t$ and
for any $j \leq N$, we have that
$p_t <  p_s + (t -s)\cdot c$.
Now, by the minimality of $t$ we have that $p_{s} = v_{j',r}+ (s - r)\cdot c$ for some $r\le s$ and $0\le j'\le N$. Hence
\begin{equation*}
p_t  \  < \  p_s + (t -s)\cdot c
 \ =\  v_{j',r}+ (s - r)\cdot c + (t-s)\cdot c
 \ =\   v_{j',r}+ (t - r)\cdot c
\end{equation*}
%\begin{eqnarray*}
%p_t & < & p_s + (t -s)\cdot C \\
% &=& v_{j',r}+ (s - r)\cdot C + (t-s)\cdot C \\
% &=&  v_{j',r}+ (t - r)\cdot C
%\end{eqnarray*}
It then follows, by the definition of $\hat{p}_t$, that
$\hat{p}_t \leq p_{s}+ (t - s)\cdot c$.
Consequently, it is not beneficial for the consumer or consumers to use storage and
$\{p_1,\dots, \hat{p}_t,\dots, p_T\}$ is a more profitable solution for the retailer.
Iterating the argument produces a sequence of preannounced prices that
satisfy the requirements of the theorem for every period $t$.
\end{proof}

The structure provided by Theorem~\ref{theorem_1} allows us to construct an
optimal preannounced pricing strategy in polynomial time via a dynamic
program. The key is that we search for prices of the form $p_t = v_{j,s} + (t - s)\cdot c$,
for some $s \leq t$ and some $0 \le j \le N$. Intuitively, this means we are
matching a price $p_t$ with a value $v_{j,s}$. Of course, this correspondence
has to satisfy numerous constraints in order to be rational and profit-maximizing.

To explain how to find the optimal correspondence, we begin with
some definitions.
For each demand value $v_{i,t}$ we associate a 2D {\em contour} $\gamma_{i,t}$ to the demand of the $i$th item in period $t$.
(Recall that demands are sorted by decreasing value in each period, that is, $v_{i,t} \geq v_{i+1,t}$.)
We then define a total ordering $\preceq$ over all the contours. Specifically, we say $\gamma_{i,t} \preceq  \gamma_{j,r}$
if $v_{i,t} + c \cdot(r-t) \leq v_{j,r}$ and $t \leq r$.
To motivate this, assume the consumer is debating whether to purchase and consume $j$ items in period $r$
or to purchase in period $t<r$ and then store the goods until period $r$.
The best choice is determined by the ordering of the contours $\gamma_{i,t}$ and $\gamma_{j,r}$.

We denote by $\mathcal{C}_t=\{\gamma_{i,t} : i\in [N]\}$  the set of contours associated with demands in period $t$.
Further, we set $\mathcal{C}_0$ to contain a dummy contour, $\mathcal{C}_0= \{ \gamma^*\}$ where $\gamma  \preceq \gamma^*$
for all $ \gamma \in \bigcup_{t=1}^{T}\mathcal{C}_t$. We will view the selection of the dummy contour in time $t$
as the assignment $p_t=L$ for some huge~$L$.

Now suppose we select a contour $\gamma_{j,s}$ at time $t$. As described,
this produces the price $p_t = v_{j,s} + (t - s)\cdot c$. In addition, this choice
constrains the price choices for all periods $r>t$. Specifically, such a price $p_r$
must correspond to a contour no higher than $\gamma_{j,s}$. Hence, the choice
of contours (excluding dummy contours) is non-increasing over time.
To formalize this for the dynamic program we use the following notation.
Define $\mathcal{F}_t(\gamma_{i,y})= \{ \gamma \in \mathcal{C}_t\cup \{ \gamma_{i,y}\}: \gamma \preceq  \gamma_{i,y} \}$
where $\mathcal{F}_t(\gamma_{i,y})$ is the \emph{feasible set} of contours for period $t$, given that the lowest contour chosen so
far is $\gamma_{i,y}$.  Given a contour  $\gamma_{i,s}$ and a time period $t \geq s$, we
define $p_t(\gamma_{i,s}) = v_{i,s} + c\cdot(t-s)$.
We then set $q_t(\gamma_{i,s}) = \#\{j\ge1: v_{j,t} \ge p_t(\gamma_{i,s})\}$ to be the number of items in period $t$ with value at least
$p_t(\gamma_{i,s})$. Note that if there is no storage then $q_t(\gamma_{i,s})$  is exactly the number of sales in period $t$
when prices are $p_t(\gamma_{i,s})$.
We are now ready to present a dynamic program that calculates the optimal sequence of preannounced prices in polynomial time.

%{\bf [In this section, make sure higher in the ordering and higher as a contour are consistent...]}

\begin{algorithm}\label{algorithm_price_commitment}
\label{no-trust-general}
\caption{Optimal preannounced pricing mechanism}
\begin{algorithmic}
\STATE $\mathcal{R}(T+1, \gamma) \gets 0$ for all $\gamma \in \bigcup_{\tau=1}^T\mathcal{C}_{\tau}$.
\FOR{$t = T \to 1$}
    \FOR{each $\gamma_{i,s} \in \bigcup_{\tau=0}^{t-1}\mathcal{C}_\tau$}
    %\STATE $R(t,\gamma) \gets 0$
    \STATE $\mathcal{R}(t, \gamma_{i,s}) \gets \max \{ q_t(\gamma_{j,r}) \cdot p_t(\gamma_{i,r})
    + \mathcal{R}(t+1, \gamma_{j,r}) :  \gamma_{j,r} \in \mathcal{F}_t(\gamma_{i,s})\}$

    \STATE $\mathcal{S}(t, \gamma_{i,s}) \gets \arg \max_{\gamma_{j,r}} \{ q_t(\gamma_{j,r}) \cdot p_t(\gamma_{i,r})
    + \mathcal{R}(t+1, \gamma_{j,r}) :  \gamma_{j,r} \in \mathcal{F}_t(\gamma_{i,s})\}$

%    \FOR{each $\gamma'_{j,r} \in \mathcal{F}_t(\gamma_{i,y})= \{ \gamma \in \mathcal{C}_t\cup \{ \gamma_{i,y}\}: \gamma \preceq  \gamma_{i,y} \}$}
%    %p_{t} \in \{ p \in Prices(t)$ such that $p = \infty $ or $(p < p_{t'} + (t-t') \cdot C) \}$ }
%        \STATE $a = \max \{k \in [n_t]: v_{k,t} \geq v_{j,r} \}$
%        \STATE $R(t, \gamma_{i,y}) \gets \max\{R(t, \gamma_{i,y}), a\cdot v_{j,r} + R(t+1, \gamma'_{j,r})$\}
%        \STATE $P_t(\gamma_{i,y}) \gets v_{j,r}$
%        \ENDFOR
        \ENDFOR
\ENDFOR
\RETURN $\{p_1(\mathcal{S}(1,\gamma^*)),p_2(\mathcal{S}(2,\mathcal{S}(1,\gamma^*))),\hdots\}$

\end{algorithmic}
\end{algorithm}

\begin{theorem} The dynamic program computes an optimal preannounce pricing strategy.
\end{theorem}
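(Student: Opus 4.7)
The plan is to verify that the dynamic program searches over an optimal-containing family of price schedules and that its Bellman equation correctly accumulates revenue. First, I would invoke Theorem~\ref{theorem_1} and Lemma~\ref{lemma_buy_on_time} to restrict attention to optimal preannounced sequences of the form $p_t = v_{j_t,s_t} + c(t-s_t)$ with $s_t \leq t$, under which no consumer stockpiles. This reduces the retailer's problem to selecting, at each period $t$, one contour $\gamma_{j_t,s_t}$ from $\bigcup_{\tau \leq t}\mathcal{C}_\tau \cup \{\gamma^*\}$, where the dummy contour $\gamma^*$ encodes the ``no sales'' choice $p_t = L$.

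Second, I would identify the feasibility constraint on a sequence of contours. By Observation~\ref{obs_inequality}, for any two periods $t' < t$ both with positive sales we have $p_t \leq p_{t'}+c(t-t')$; writing both prices in contour form and cancelling yields $v_{j_t,s_t}-c\cdot s_t \leq v_{j_{t'},s_{t'}}-c\cdot s_{t'}$. This is exactly the inequality part of the order $\preceq$ on contours (the time component of the definition of $\preceq$ is consistent with the fact that a contour $\gamma_{j,s}$ can only be invoked from its introduction period $s$ onwards). Combined with the option of reusing the currently active contour, this shows that the feasible contour sequences are precisely the directed paths $(t,\gamma)\to(t+1,\gamma')$ with $\gamma'\in\mathcal{F}_t(\gamma)$, exactly the transitions allowed by the DP.

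Third, I would prove the Bellman equation by backward induction on $t$. The base case $\mathcal{R}(T+1,\gamma)=0$ is immediate. For the inductive step, once the last-chosen contour is $\gamma$ and we pick $\gamma_{j,r}\in\mathcal{F}_t(\gamma)$ for period $t$, the price equals $p_t(\gamma_{j,r})=v_{j,r}+c(t-r)$ and, since no storage is used, sales in period $t$ equal the number of consumers (resp.\ marginal utilities in the single-buyer case) whose period-$t$ valuation is at least $p_t(\gamma_{j,r})$; by definition this is $q_t(\gamma_{j,r})$. Adding $q_t(\gamma_{j,r})\cdot p_t(\gamma_{j,r})$ to $\mathcal{R}(t+1,\gamma_{j,r})$ and maximising over $\mathcal{F}_t(\gamma)$ therefore gives the optimal period-$t$-onwards revenue subject to the constraint $\gamma$. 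Since $\gamma^*$ imposes no constraint, $\mathcal{R}(1,\gamma^*)$ is the unconstrained optimum, and tracing $\mathcal{S}$ forward reconstructs the announced schedule; correctness of the returned prices then follows directly.

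I expect the main obstacle to be the careful bookkeeping around the $\preceq$ ordering: specifically, ensuring that (i) dummy-contour choices corresponding to periods of zero sales always lie in $\mathcal{F}_t(\gamma)$ and never tighten the constraint for subsequent periods, and (ii) the ``continuation'' of an earlier contour $\gamma_{i,s}$ at a later period $t$ (with $s<t$) is both permitted by the explicit inclusion $\gamma_{i,s}\in \mathcal{F}_t(\gamma_{i,s})$ and correctly priced as $p_t(\gamma_{i,s}) = v_{i,s} + c(t-s)$, so that Theorem~\ref{theorem_1}'s search space is faithfully represented without double-counting a reused contour as a new one.
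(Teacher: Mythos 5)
Your proposal is correct and follows essentially the same route as the paper: restrict via Theorem~\ref{theorem_1} and Lemma~\ref{lemma_buy_on_time} to no-storage schedules with contour-induced prices, encode the storage constraint via the $\preceq$ order on contours, and verify the Bellman recursion by backward induction with invariant that $\mathcal{R}(t,\gamma)$ is the optimal revenue from $t$ onward given that $\gamma$ bounds the period-$t$ price. Your version is somewhat more explicit than the paper's (e.g., deriving the feasibility condition from Observation~\ref{obs_inequality} and flagging the dummy/reused-contour bookkeeping), but the underlying argument is the same.
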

\begin{proof}
We want to show that $\mathcal{R}(t, \gamma_{i,s})$ computes the optimal revenue obtainable
in period $t$ to $T$, given that the maximum price we may use in period $t$ is
induced by the contour $\gamma_{i,s}$. That is, the maximum price $p_t$ is
at most $v_{i,s}+(t-s)\cdot c$. We prove this by backwards induction from the
last period. For $t=T$, we have that $\mathcal{R}(t, \gamma_{i,s})$
maximises the quantity times price given the upper bound on price implied by $\gamma_{i,s}$, as desired.
Now consider $\mathcal{R}(t, \gamma_{i,s})$, for $t<T$.
Again, by Theorem \ref{theorem_1} there exists an optimal pricing strategy in which the price at
period $t$ is equal to $p_t(\gamma_{i,r})$ for some $\gamma_{i,r} \in \bigcup_{\tau=1}^{t}\mathcal{C}_\tau$.
To maximise profits we wish to select a contour in $\mathcal{C}_t$; however, the corresponding price
must not be above $\gamma_{i,s}$. Thus, the optimal choice, $\mathcal{S}(t, \gamma_{i,s})$, must either
be $\gamma_{i,s}$ itself or a lower contour in $\mathcal{C}_t$. The objective is to maximise
revenue in period $t$ plus revenues from periods $t+1$ to $T$. The correctness of the procedure
follows as the latter has been calculated inductively.
 \end{proof}

The following lemma shows that the algorithm presented above is very efficient in terms of time requirement.

\begin{lemma}
An optimal sequence $(p_1\hdots,p_T)$ of preannounced pricing can be computed in time $O(T \cdot D^2)$,
where $T$ is the number of periods and $D$ is the total number of items in demand ($D \leq N\cdot T$).
\end{lemma}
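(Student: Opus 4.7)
The plan is to bound the running time of Algorithm~\ref{algorithm_price_commitment} by multiplying the number of DP states by the per-state work. First, I would argue that the state space, indexed by pairs $(t,\gamma_{i,s})$ with $1\leq t\leq T$ and $\gamma_{i,s}\in\bigcup_{\tau=0}^{t-1}\mathcal{C}_\tau$, has size $O(T\cdot D)$: the total number of contours (including the dummy $\gamma^*$) is $1+|\bigcup_{\tau=1}^{T}\mathcal{C}_\tau|=O(D)$, and each contour is paired with at most $T$ time indices.

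Next I would bound the work needed to evaluate a single table entry. Computing $\mathcal{R}(t,\gamma_{i,s})$ takes a max over the feasible set $\mathcal{F}_t(\gamma_{i,s})\subseteq \mathcal{C}_t\cup\{\gamma_{i,s}\}$, which has at most $N+1\leq D+1$ elements, so the max is over $O(D)$ options. For each option $\gamma_{j,r}$, the price $p_t(\gamma_{j,r})=v_{j,r}+c(t-r)$ is constant-time to compute, the membership test $\gamma_{j,r}\preceq \gamma_{i,s}$ is a single comparison, and the recursive value $\mathcal{R}(t+1,\gamma_{j,r})$ is a table lookup. The only nontrivial quantity is $q_t(\gamma_{j,r})$, the number of period-$t$ valuations at least as large as $p_t(\gamma_{j,r})$. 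Since the valuations in each period are already sorted in decreasing order by the definition of $\mathcal{C}_t$, I would eliminate this bottleneck by a one-time preprocessing pass: for each period $t$, sort the $O(D)$ candidate prices and sweep them against the sorted valuations of period $t$ with two pointers, producing all values $q_t(\cdot)$ in $O(D)$ time after an $O(D\log D)$ sort. Summed over all $T$ periods, the preprocessing costs $O(T\cdot D\log D)$, which is dominated by the target $O(T\cdot D^2)$.

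Multiplying the state count by the per-state work then gives $O(T\cdot D)\cdot O(D)=O(T\cdot D^2)$, and the final reconstruction of the optimal price sequence from the stored $\mathcal{S}(\cdot,\cdot)$ table is a further $O(T)$ pass, which is absorbed into the main bound. The only technical care required is in the $q_t$ bookkeeping; everything else is a direct counting argument, so I do not anticipate any real obstacle.
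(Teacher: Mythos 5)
Your argument is essentially the paper's own proof: the outer loop contributes a factor of $T$, the inner loop ranges over $O(D)$ contours, and each table entry is a maximum over an $O(D)$-size feasible set, giving $O(T\cdot D^2)$. The only difference is that you explicitly account for the cost of evaluating $q_t(\cdot)$ via a preprocessing sweep --- a detail the paper's one-line proof silently absorbs into the $O(D)$ per-assignment bound --- which is welcome extra care but not a different approach.
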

\begin{proof}
The outer loop covers $T$ periods. The inner loop takes examines at most $D$ counters (one for each demand item).
The two variable assignments require $O(D)$ operations. Together this takes time $O(T \cdot D^2)$.
\end{proof}
We end this section by noting that for the model with divisible goods, \cite{dudine2006storable} provided another dynamic program to compute efficiently the sequence of preannounced prices. One of appealing features of the algorithm presented here is the flexibility to work for arbitrary consumer valuations.\footnote{We remark that the model in \citet{dudine2006storable} contains additional assumptions that are not
required here, such as the retailer revenue being a concave function of the price in each time period.}

\section{A Relationship between Preannounced Pricing Profits and Contingent Pricing Profits}
Recall, in the divisible storable good \citep{dudine2006storable},
retailer profits under a preannounced pricing mechanism are always higher than or equal to
those obtained under a contingent pricing mechanism. We saw an example in Section~\ref{sec:bad-example} showing
this need not be the case for an indivisible good. In this section, we quantify exactly how much more profits
a contingent pricing mechanism can generate than a preannounced pricing mechanism for an indivisible good.
%In this section, we will show that this property does not always hold when the demand is indivisible.
Specifically, let $\Pi^{CP}(\mathcal{G})$ and $\Pi^{PA}(\mathcal{G})$ denote the retailer revenue
in the game $\mathcal{G}$ under contingent pricing and preannounced pricing respectively.
Our main result is that, for the multi-buyer case with $N(\mathcal{G})$ consumers,
there exists an example for which $\Pi^{CP}(\mathcal{G}) = \Omega(\log(N(\mathcal{G})) \cdot \Pi^{PA}(\mathcal{G}))$.
Furthermore this result is tight, for any game  $\mathcal{G}$, we have
$\Pi^{CP}(\mathcal{G})= O(\log(N(\mathcal{G})) \cdot \Pi^{PA}(\mathcal{G}))$.
Before presenting these results, we give a simple example to show that
in the single-buyer case there are examples where the
retailer can obtain arbitrarily higher profits using a contingent pricing mechanism rather than
a preannounced pricing mechanism.

\subsection{The Single-Buyer Case.}
Here is a single-buyer example where profits for the contingent pricing mechanism are arbitrarily higher than for the preannounced pricing mechanism

\begin{theorem} \label{unbounded_NC_profits_single_buyer}
For any $k \in \mathbb{N}$, there exists a single-buyer storable good game $\mathcal{G}$
such that $\Pi^{CP}(\mathcal{G}) > k \cdot \Pi^{PA}(\mathcal{G})$.
\end{theorem}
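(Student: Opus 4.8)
The plan is to exhibit, for each $k$, an explicit single-buyer game $\mathcal{G}_k$ with many value levels and many periods, and then bound the two revenues separately. The example of Section~\ref{sec:bad-example} already isolates the mechanism I want to amplify: under contingent pricing the retailer uses an \emph{off-path} threat of a high future price to force the buyer to buy a high-value unit early and store it, while still clearing lower-value demand at a genuinely low on-path price -- an intertemporal price discrimination that commitment forbids. So I would design $\mathcal{G}_k$ so that within each period the buyer faces a high ``anchor'' marginal value (present only to make a credible threat) together with a schedule of lower marginal values, arranged in time so that the monopoly revenue extractable by any single committed price level is small while the efficient allocation sells many units. Concretely I would take a harmonic profile of $m$ served values $1,\tfrac12,\dots,\tfrac1m$, whose total surplus is $H_m=\Theta(\log m)$ but whose best uniform-price revenue is $O(1)$, spread over $T=\Theta(m)$ periods; since the theorem only asks for the ratio to exceed $k$ it suffices to take $m$ large, so I have latitude in the exact profile.

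To upper bound $\Pi^{PA}(\mathcal{G}_k)$ I would first invoke Lemma~\ref{lemma_buy_on_time} to restrict to storage-free strategies, and then use the constraint made explicit in Observation~\ref{obs_inequality}: if there are sales in period $t$ then $p_t\le p_{t'}+(t-t')\cdot c$ for every $t'<t$. These inequalities force the committed price path to rise at rate at most $c$, so the retailer cannot simultaneously charge a high price where the high-value demand sits and a low price where the low-value demand sits. Feeding this into the linear formulation that precedes Theorem~\ref{theorem_1}, and using the normal-form structure of Theorem~\ref{theorem_1} (optimal prices are of the shape $v_{j,s}+(t-s)\cdot c$), I would argue by an exchange argument that blending several price levels cannot beat committing to a single level, and hence that $\Pi^{PA}(\mathcal{G}_k)=O(1)$ for the chosen profile.

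For the lower bound on $\Pi^{CP}(\mathcal{G}_k)$ I would construct an explicit history-dependent strategy and verify it is a subgame perfect equilibrium by backward induction, exactly as in the four-case analysis of the example. On the equilibrium path the retailer posts a decreasing price sequence and, in each period, the credible threat to revert to that period's anchor value whenever the designated unit has not been stored induces the buyer to prepurchase-and-store one further high-value unit; the retailer then drops the price to clear that period's low-value unit. Summing the forced early purchases over the $\Theta(m)$ periods, my aim is to show the contingent revenue tracks the efficient surplus up to a constant, giving $\Pi^{CP}(\mathcal{G}_k)=\Omega(\log m)$ and hence $\Pi^{CP}(\mathcal{G}_k)>k\cdot\Pi^{PA}(\mathcal{G}_k)$ once $m$ is large.

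The main obstacle is this last step: establishing that the threats are \emph{credible} (subgame perfect) rather than merely announced, and that the buyer's best response really follows the intended chain of storage decisions. Because a single deviation (refusing to store in some period) changes the inventory the retailer faces in the next subgame, I must check period-by-period that after any such deviation the retailer indeed prefers to carry out the punishing price, and that the buyer is weakly -- and, after an arbitrarily small perturbation, strictly -- better off complying. This is exactly where indivisibility does the work: as Section~\ref{section_cont_example} shows, a divisible buyer could finely re-optimize quantities and undercut the discrimination, so I expect the argument to hinge on the integrality of purchases to pin the chain in place. A secondary obstacle is making the $\Pi^{PA}=O(1)$ bound airtight for \emph{all} committed price paths rather than just monotone ones, which is where I would rely most heavily on the reduction in Theorem~\ref{theorem_1}.
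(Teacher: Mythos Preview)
Your approach is viable in spirit but vastly more complicated than necessary, and it misses the key simplification that the single-buyer setting affords. You are essentially re-inventing the machinery the paper uses for the \emph{multi-buyer} result (Theorem~\ref{log_factor_thm}), where one genuinely needs $T=\Theta(m)$ periods, a carefully designed block structure, and a delicate inductive verification that pacman is a best response.

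The paper's proof of Theorem~\ref{unbounded_NC_profits_single_buyer} uses only $T=2$ periods and $c=0$. The single buyer has zero utility in period~1, and in period~2 her marginal utilities are $1+\epsilon, \tfrac12, \tfrac13, \dots, \tfrac{1}{N}$. Because a single buyer can have many marginal values \emph{within one period}, there is no need to spread the harmonic profile across time. This makes both bounds almost immediate: $\Pi^{PA}=1+\epsilon$ is just a one-shot monopoly calculation (period~1 contributes nothing), and the contingent equilibrium involves a single threat --- if the buyer does not purchase one unit in period~1, the retailer posts $p_2=1+\epsilon$ and she gets zero surplus; if she does, the retailer optimally posts $p_2=\tfrac{1}{N}$ and sells $N-1$ more units. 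Setting $p_1=H_N+\tfrac{1}{N}-1+\epsilon$ makes the buyer exactly indifferent, and the retailer collects $H_N+\epsilon$.

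Relative to this, your plan incurs two avoidable difficulties. First, your $\Pi^{PA}=O(1)$ bound requires an exchange argument over multi-period committed price paths; with $c=0$ the storage constraint only forces non-increasing prices, so ``blending cannot beat a single level'' is not automatic and needs the specific structure of your construction. Second, your chain-of-threats SPNE verification is genuinely intricate: each refusal to store changes the continuation subgame, and you must check the retailer's incentives at every node. None of this is needed once you observe that the single buyer's concave utility already lets you pack the entire harmonic schedule into one period.
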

\begin{proof}
We create a game $\mathcal{G}$ with $T=2$ periods. The buyer receives zero value from consumption
in the first period. In the second period, the buyer has a marginal value of $1+\epsilon$
for the first unit of consumption (for some small $\epsilon >0$), and a marginal value of $\frac{1}{i}$ for the
$i$th unit of consumption, for $2\le i\le N$. This is summarized in
Table \ref{single_buyer_example_no_commitment_table}. The storage
costs are assumed to be negligible, that is, $c=0$.

\begin{table}
\begin{center}
\caption{Single-Buyer Example with Profits arbitrarily Higher under Contingent Pricing
than under Preannounced Pricing}\label{single_buyer_example_no_commitment_table}
\begin{tabular}{|c || c | c |c | c | c | c | c |}
\hline
Item & 1& 2& 3& $\cdots$ &i& $\cdots$& N\\
\hline
Marginal Value in Period $t=1$ &0&0& 0& $\cdots$ &0& $\cdots$ & 0 \\
Marginal Value in Period $t=2$ &1+$\epsilon$ & $\frac12$ & $\frac13$ &$\cdots$&$\frac{1}{i}$&$\cdots$& $\frac{1}{N}$ \\
\hline
\end{tabular}
\end{center}
\end{table}

%\begin{center}
%\begin{table}\label{single_buyer_example_no_commitment_table}
%\caption{Single-buyer example where profit under no-commitment is higher than price-commitment}
%\begin{tabular}{|c | c | c |}
%\hline
%Item & Marginal utility at period $t=1$& Marginal utility at period $t=2$ \\ \hline
%$1$ & $0$ & $1+\epsilon$ \\ \hline
%$2$ & $0$ & $\frac{1}{2}$  \\ \hline
%\vdots & \vdots & \vdots  \\
%$i$ & $0$ & $\frac{1}{i}$  \\
%\vdots & \vdots & \vdots \\ \hline
%$N$ & $0$ & $\frac{1}{N}$  \\
%\hline
%\end{tabular}
%\end{table}
%\end{center}

Using a preannounced pricing strategy the maximum profit obtainable by the retailer
is clearly $\Pi^{PA}(\mathcal{G}) = 1 + \epsilon$. On the other hand, the retailer can do far better
by using a contingent pricing mechanism.
To see this, first consider what happens if the consumer refuses to buys any items
in the first period. The price in the second period will be then be the static monopoly price of $1+\epsilon$;
the consumer surplus would then be zero.
So suppose the retailer announces a price of
$p_1 = H_N + \frac{1}{N} -1+\epsilon$ in the first period (where  $H_N = \sum_{i=1}^{N}\frac{1}{i}$). As discussed, the consumer generates no surplus if she refuses to buy
in period 1. What happens if the consumer buys one unit at price $p_1$ and stores it for consumption in period $2$?
Then, in the second period, if the consumer buys $k<N$ additional items, her marginal value for that final unit of consumption is $\frac{1}{k+1}$. Therefore, given that the consumer bought exactly one unit in period $1$,
the optimal price for the retailer is  $p_2 = \frac{1}{N}$ in the second period. Consumer surplus over the two
periods is then
\begin{equation*}
\left(1+\epsilon +\sum_{i=2}^{N}\frac{1}{i}\right) - \left(1\cdot p_1 + (N-1)\cdot p_2\right)
\ =\
\left(H_N+\epsilon\right) - \left(H_N + \frac{1}{N} -1+\epsilon + \frac{N-1}{N}\right)
\ =\ 0
\end{equation*}
%\begin{eqnarray*}
%\left(1+\epsilon +\sum_{i=2}^{N}\frac{1}{i}\right) - \left(1\cdot p_1 + (N-1)\cdot p_2\right) &=&
%\left(H_N+\epsilon\right) - \left(H_N + \frac{1}{N} -1+\epsilon + \frac{N-1}{N}\right)  \\
%&=& 0
%\end{eqnarray*}

Hence the consumer is no worse off by purchasing one item in the first period at $p_1$ than if she waits and buys
only in the second period. The consumer has negative utility if she purchases more than one unit in period one.
Thus, we have an equilibrium where the consumer purchases exactly one unit in period one with
retailer's profits of
\begin{equation*}
\Pi^{CP}(\mathcal{G})
\ =\ 1\cdot p_1 + (N-1) \cdot p_2
\ =\ H_N + \frac{1}{N} -1+\epsilon + \frac{N-1}{N}
\ =\ H_N +\epsilon
\end{equation*}

%\begin{eqnarray*}
%\Pi^{CP}(\mathcal{G}) &=& 1\cdot p_1 + (N-1) \cdot p_2 \\
%&=& H_N + \frac{1}{N} -1+\epsilon + \frac{N-1}{N} \\
%&=& H_N +\epsilon
%\end{eqnarray*}
It is well-known that $H_N > \log N +\frac12$. Consequently, we
have $\Pi^{CP} \ge \log N\cdot \Pi^{PA}$, as $\epsilon \rightarrow 0$.
\end{proof}

%{\bf [I'm not sure what the message of the next two paragraphs is meant to be?....]}
%The result of Theorem \ref{unbounded_NC_profits_single_buyer} can be extended to the sequential game
%known as the durable good monopoly problem with a single buyer. In the durable good monopoly problem
%with a single buyer, there is one seller (the duropolist), and a single consumer whose marginal utility for
%obtaining the $i$th item is $v_{i}$, with $v_1 \geq v_2 \geq \cdots \geq v_N$. At time $t$, $1 \le t \le T$,
%the duropolist selects a price $p_t$ to charge for the good. Then, the consumer decides whether and
%how many units of the good to buy in that period. The monopolist observes the number of items bought
%at period $t$ and the game then proceeds to period $t+1$. If the consumer buys $x_i$ units at period $i \in [T]$,
%then her total utility obtained in the game is $v_1+v_2+\cdots + v_{\ell}- \sum_{t=1}^T p_t \cdot x_t $
%where $\ell = \sum_{t=1}^T x_i$.
%
%Consider now that the valuations of the second column of Table \ref{single_buyer_example_no_commitment_table}
%are the values $v_1, v_2, \hdots,v_N$ of the durable good monopoly problem and let $T=2$. We leave to the reader to
%verify that the optimal price commitment and non-commitment strategies described above still apply for this
%problem. Therefore, in the durable good monopoly problem with a single buyer who has
%multiple and indivisible demand, the profits obtained using a non-commitment strategy are unbounded with
%respect to those obtained using a price commitment policy.

\subsection{The Multi-Buyer Case.}
Is it also the case that retailer profits can be arbitrarily higher under contingent pricing mechanisms
than under preannounced pricing mechanisms when there are a large number of consumers?
We remark that the example used to prove Theorem \ref{unbounded_NC_profits_single_buyer}
does not extend to the multi-buyer case.
To see this, suppose we interpret Table~\ref{single_buyer_example_no_commitment_table}
as a game with $N({\mathcal{G}})=N$ buyers each with a unit demand. The optimal prices using the contingent pricing mechanism would then be $p_1= 1 +\epsilon$ (one consumer buys) and $p_2 = 1/N$ (and the
 remaining $N-1$ consumers buy). But this only yields a total profit that is less than $2\cdot \Pi^{PA}$.
 Intuitively, the reason why higher profits cannot be extracted is because the $N$ buyers do
 not cooperate. In contrast, in the single-buyer setting, the retailer is able to exploit the fact that
 the buyer is willing to buy at a ``loss" in period~1 in order to obtain a surplus in period~2.

Nonetheless, we now show, via a far more complex example, that the retailer in the multi-buyer setting
can generate a multiplicative factor $\log(N({\mathcal{G}}))$ more profits by using a contingent pricing mechanism.
This is interesting, and perhaps surprising, in light
of the results of \citet{berbeglia2014} who examined a model of a durable good monopoly problem with
$N$ atomic consumers and finite time horizon originally proposed by \citet{bagnoli1989durable}.
In that model, \citet{berbeglia2014} proved that retailer profits under a contingent pricing mechanism are always within
an additive constant (and within at most a multiplicative factor of two)
of profits under a preannounced pricing strategy.
The model proposed by \citet{bagnoli1989durable} corresponds to a special case of the storable
good problem studied here. Specifically, to capture their model we simply
assume each consumer has a positive valuation only in the final period, that is,
$v_{i,t} =0$ for all $1\le i\le N$ and for all $1\le t\le T-1$.\footnote{Essentially, we can view the
all the consumers in the \citet{bagnoli1989durable} model as only wishing to consume
the durable good in the final period. In those circumstances, it does not matter whether the good is durable or simply storable.}
This additive constant bound does not extend to the more general storable goods problem studied in this paper. Again, as in the single-buyer case, the ratio between the profits
under a contingent pricing mechanism over the profits under a preannounced pricing strategy can be unbounded.

\begin{theorem} \label{log_factor_thm}
There exists an infinite
sequence of games $\mathcal{G}_1,\mathcal{G}_2,\hdots$, such that the number of
consumers is strictly increasing and that
$ \frac{\Pi^{CP}(\mathcal{G}_i)}{\Pi^{PA}(\mathcal{G}_i)} = \Omega(\log(N(\mathcal{G}_i)) = \Omega(\log T)$.
\end{theorem}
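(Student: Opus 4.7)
The plan is to construct, for each positive integer $T$, a game $\mathcal{G}_T$ with $T$ periods and $N_T = \Theta(T)$ consumers whose valuations are chosen so that the preannounced optimum is bounded by an absolute constant (independent of $T$) while the contingent optimum is $\Omega(H_T) = \Omega(\log T) = \Omega(\log N_T)$. The construction generalises the mechanism behind the two-period example in Section~\ref{sec:bad-example}, where the contingent retailer uses the credible threat of a high period-$2$ price to force consumer~$1$ to buy and store an extra unit, thereby freeing up period~$2$ for a price-discriminating low-price sale to consumer~$2$, by chaining $T$ copies of this threat-storage interaction.

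Concretely, in $\mathcal{G}_T$ I will place in each period $t$ a distinguished \emph{key consumer} $K_t$ whose period-$t$ value is a harmonic increment proportional to $1/t$, together with one or more \emph{threat consumers} whose role is to make low prices credible in later periods. The idea is that, on the equilibrium path after histories in which $K_{t-1}$ has stored as expected, the retailer's myopic incentive in period $t$ is to drop the price to sell to the threat consumers, which then gives $K_t$ a strict incentive to buy two units in period $t$ (one to consume, one to store) rather than wait and face competition for storage at the lower period-$(t+1)$ price. Inductively, $K_1$ ends up buying $\Theta(T)$ units in period $1$ at a price proportional to $H_T$, and the revenue at each subsequent period contributes an additional harmonic increment $1/t$.

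The upper bound $\Pi^{PA}(\mathcal{G}_T) = O(1)$ will follow from Observation~\ref{obs_inequality} together with Lemma~\ref{lemma_buy_on_time}: the optimal preannounced prices satisfy $p_t \le p_{t'} + (t-t')c$ whenever sales occur in period $t$, and the valuations in $\mathcal{G}_T$ will be calibrated so that this arbitrage constraint forces the preannounced sequence either to price out almost all of the $K_t$ or to collapse to a single per-period static monopoly revenue. For the lower bound $\Pi^{CP}(\mathcal{G}_T) = \Omega(H_T)$, I will exhibit an explicit SPNE by backward induction from period $T$: in period $T$ the retailer plays the static monopoly price; in each earlier period $t$ both the retailer's price and the key consumer's stocking decision are verified, via a case analysis on the stored inventory, to be mutual best responses given the credible period-$(t+1)$ continuation. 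Summing the per-period contributions yields revenue $\Omega(H_T)$, and $N_T = \Theta(T)$ then gives the stated $\Omega(\log N_T) = \Omega(\log T)$ ratio.

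The main obstacle is designing the valuation profile so that all three requirements hold simultaneously: the preannounced revenue must stay $O(1)$ in spite of the many consumers (here Theorem~\ref{theorem_1} is a useful sanity check, since it constrains the form of any optimal preannounced price), the contingent chain of threats must be subgame perfect at every history (including off-path histories in which some $K_t$ deviates by storing too few or too many units), and the per-period revenue gains must actually compose to $H_T$ rather than telescope away. Verifying subgame perfection across the $T$ levels of the backward induction is the most delicate step, and is precisely why the example must be "far more complex" than the single-buyer example of Theorem~\ref{unbounded_NC_profits_single_buyer}: in the single-buyer case one cooperating buyer automatically internalises the full chain of threats, whereas here each $K_t$ must be individually incentivised without the retailer sacrificing the revenue extracted from earlier key consumers.
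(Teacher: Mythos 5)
There is a genuine gap, and it is concentrated exactly where you place the valuations. Your plan gives the key consumer $K_t$ a value proportional to $1/t$ in period $t$, i.e.\ values that \emph{decrease} over time, and you hope that Observation~\ref{obs_inequality} will cap $\Pi^{PA}$ at $O(1)$. But Observation~\ref{obs_inequality} only forbids prices that \emph{rise} faster than the storage cost; it places no constraint whatsoever on a decreasing preannounced schedule. Against your profile the retailer can simply preannounce $p_t = 1/t - \varepsilon$: no consumer gains by storing (prices fall), each $K_t$ buys in her own period, and the preannounced revenue is already $\Theta(H_T)$ from the key consumers alone (adding threat consumers can only increase it further). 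So the ratio $\Pi^{CP}/\Pi^{PA}$ is $O(1)$ for this family, not $\Omega(\log T)$. This is essentially why the paper warns that the harmonic single-buyer example of Theorem~\ref{unbounded_NC_profits_single_buyer} does not transfer to the multi-buyer setting: there the single buyer internalises the whole chain of threats; here each atomic buyer must be exploited separately, and decreasing values over time are exactly the profile a preannounced schedule can track perfectly. Beyond this, the proposal never actually specifies the game, the threat consumers, or the SPNE verification, and you yourself flag that "designing the valuation profile so that all three requirements hold simultaneously" is the open obstacle --- but that design is the entire content of the theorem.

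The paper's construction inverts your monotonicity. It takes $T=2^n-1$ periods partitioned into $n$ blocks, where block $\mathcal{B}_k$ occupies $2^{n-k}$ consecutive periods and its (one-per-period) consumers have value $2^{k-1}$: values \emph{increase geometrically over time} while block lengths shrink geometrically, and $c=0$. With $c=0$, Observation~\ref{obs_inequality} forces any preannounced schedule with sales to be non-increasing, so preannounced pricing collapses to a fixed price; every fixed price $2^{k-1}$ yields $2^n-2^{k-1}$, hence $\Pi^{PA}=2^n-1$. Under contingent pricing the ``pacman'' strategy (price equals the highest remaining valuation) against ``get-it-while-you-can'' is shown to be an SPNE by induction on the number of remaining blocks, and it extracts nearly full surplus $n\cdot 2^n-(n-1)$, because each high-value consumer buys early (storage is free) before the price falls. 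The $\log N$ factor is the number of blocks $n=\log_2(T+1)$, not a harmonic sum. If you want to salvage your approach, the fix is to make the late consumers the valuable ones so that the no-arbitrage constraint binds against the preannounced retailer rather than being vacuous.
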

\begin{proof}
Take any integer $n>2$. We create a game with $T= \sum_{i=1}^n2^{n-i} = 2^n-1$ periods.
In each period $t \in [T]$, there is exactly one consumer with positive valuation for the item; that is, $v_{1, t}>0$ and $v_{i,t}=0$,
for all $1\le t \le T$ and all $2\le i \le n$.
Furthermore, each consumer has a positive valuation for the good in only one period.
Specifically, we partition  the time periods into $n$ blocks, $\{\mathcal{B}_1, \mathcal{B}_2,\dots, \mathcal{B}_n\}$.
Here block $\mathcal{B}_k$ consists of $2^{n-k}$ consecutive time periods.
Note that $\sum_{k=1}^n|\mathcal{B}_k| = 2^n-1$ and so there are indeed $T=2^n-1$ time periods in
total.
It remains to assign, at each period $t$, a valuation to the only consumer who wishes to consume in that period.
If consumer $i$ is the unique buyer with positive valuation in period $t$ then we say that $i$ \emph{is} the
consumer of period $t$, or that $i$ \emph{belongs} to period $t$. (Given a set $S$ of time periods, we say
that a consumer \emph{belongs} to $S$ if the consumer belongs to any period in $S$.) The consumer belonging to
any period in block $\mathcal{B}_k$, for each $1\le k\le n$, is given a valuation of $v_k=2^{k-1}$.
Again, we assume that storage costs are negligible, that is $c=0$.
Table \ref{tb_game_construction} summarizes this construction.

\begin{table}\label{tb_game_construction}
\caption{Block Sizes and Valuations in the Multi-Buyer Game}
\begin{center}
\begin{tabular}{|c || c | c |c | c | c | c | c |}
\hline
Block Number & 1& 2& 3& $\cdots$ & s & $\cdots$ & $n$\\
\hline
Time Length & $2^{n-1}$ & $2^{n-2}$ & $2^{n-3}$& $\cdots$ & $2^{n-s}$ & $\cdots$ & $2^{n-n}=1$ \\ \hline
Consumer Value & 1& $2^1$& $2^2$& $\cdots$ & $2^{s-1}$ & $\cdots$ & $2^{n-1}$ \\ \hline
\end{tabular}
\end{center}
\end{table}

%\begin{center}
%\begin{table}\label{tb_game_construction}
%\caption{Blocks' sizes and valuations in the constructed game}
%\begin{tabular}{|c | c | c |}
%\hline
%Block Number & Time Length & Consumer Value \\ \hline
%$1$ & $2^{N-1}$ & $1$ \\ \hline
%$2$ & $2^{N-2}$ & $2$  \\ \hline
%\vdots & \vdots & \vdots  \\
%$s$ & $2^{N-s}$ & $2^{s-1}$  \\
%\vdots & \vdots & \vdots \\ \hline
%$N$ & $2^{N-N}=1$ & $2^{N-1}$  \\
%\hline
%\end{tabular}
%\end{table}
%\end{center}
Let's first calculate the retailer profit, $\Pi^{PA}(\mathcal{G})$, under preannounced pricing.
To do this, we claim that there is an optimal preannounced pricing strategy in
which prices do not change over time. Let $\{p_1, p_2,\dots, p_T\}$ denote an optimal preannounced pricing strategy. Consider the largest $t$ such that $p_t < p_{t+1}$. Because there are no storage costs
no consumer in any period $s$ with $s>t$ would buy at price $p_{t+1}$. Thus, setting a price
$p' _{t+1}= p_t$ at period $t+1$ will not cause the retailer's profits to decrease.
On the other hand, take the largest $t$ such that $p_t > p_{t+1}$. If no consumer buys in period $t$ then
decreasing the price in period $t$ down to $p' _{t}=p_{t+1}$ can only increase profits.
If at least one consumer does buy in period $t$ at price $p_t$ then that must be
the consumer that belongs to period $t$; any other consumer will do better by waiting to purchase
in period $t+1$. This means that the consumer from period $t$ has a valuation
at least $p_t$. By the construction of the game, this in turn implies that every consumer belonging
to a period $s$, with $s>t$, also has a
valuation at least $p_t$. Therefore, even with a fixed price of $p_t$ between periods $t+1$ to $T$ all consumers in periods $t+1$ to $T$ would still buy the item. This increases monopoly profits.
It follows that there is a fixed price strategy that yields optimal profits.
Therefore, to compute $\Pi^{PA}(\mathcal{G})$ we can restrict attention to fixed price strategies.
It is then easy to verify that the optimal fixed price is $1$. Every consumer will then purchase
the good giving the retailer a profit of $1$ in each time period, and thus
of $T=\sum_{i=0}^{n-1} 2^i = 2^n - 1$ in total. (Note also that $N(\mathcal{G})=T=2^n-1$). Hence $\Pi^{PA}(\mathcal{G}) = 2^n - 1$.

%step 1: we prove that the PACMAN strategy is optimal.

Next consider contingent pricing mechanisms.
We will show that there is a subgame perfect equilibrium in which the retailer extracts
almost all the consumer surplus. This equilibrium is composed of the following pair of
strategies. The strategy of the retailer is to announce a price equal to the value of the highest
consumer yet to buy. The strategy of each consumer is to buy the item in the earliest period $t$ where
the price is at most her valuation, provided the consumer belongs to a period $s \ge t$.
Following \citet{bagnoli1989durable}, who studied a similar strategy pairing
in a different setting, we call the retailer's strategy \emph{pacman} and the consumers
strategies \emph{get-it-while-you-can}.

To analyse this we denote by $\mathcal{G}_{i,k}$ a subgame in which the remaining consumers
belong to the set of blocks $\{\mathcal{B}_{i}, \mathcal{B}_{i+1}, \dots, \mathcal{B}_k\}$. That is,
the consumers belong to a subset  $S \subseteq \bigcup_{j=i}^k \mathcal{B}_j$. Without loss of generality,
we may assume that $S$ contains at least one consumer of the block $\mathcal{B}_k$
and that the subgame begins at a period $t$ associated with block $\mathcal{B}_i$.
We wish to prove that, in any subgame $\mathcal{G}_{i,k}$, the \emph{pacman} strategy is a
best response to the consumer strategy \emph{get-it-while-you-can}. We proceed by induction on $k-i$.
If $k-i=0$ then all the remaining consumers have a fixed value. It follows immediately
that the \emph{pacman} strategy is a best response to \emph{get-it-while-you-can}.
Suppose that there are two blocks, that is, $k-i =1$. Let $|\mathcal{\tilde{B}}_i|$, respectively $|\mathcal{\tilde{B}}_k|$,
denote the number of active consumers in block $i$, respectively $k$. Note that $|\mathcal{\tilde{B}}_j| \le |\mathcal{B}_j|$
as the set $S$ in the subgame $\mathcal{G}_{i,k}$ need contain only a subset of the original collection of customers
in block $\mathcal{B}_j$. Using the pacman strategy then yields profits at least:
\begin{equation*} \label{base_case_eq}
|\mathcal{\tilde{B}}_k| \cdot v_k + (|\mathcal{\tilde{B}}_i|-1) \cdot v_i
\ =\ |\mathcal{\tilde{B}}_k| \cdot 2 \cdot v_i + (|\mathcal{\tilde{B}}_i|-1) \cdot v_i
\ =\ (2 \cdot |\mathcal{\tilde{B}}_k| + |\mathcal{\tilde{B}}_i|-1 )\cdot v_i %\label{suboptimal_strategy}
\end{equation*}
Observe that in period $2$, of the subgame $\mathcal{G}_{i,k}$, we may make only $|\mathcal{\tilde{B}}_i|-1$ sales
as one of the lower value consumers may only wish to purchase in the first period $t$ of the subgame.
The only other logical strategy against get-it-while-you-can is for the retailer to announce a price of $v_i$  in
the first period $t$. This would cause each consumer to
buy in the first time period giving a profit of $(|\mathcal{\tilde{B}}_k| + |\mathcal{\tilde{B}}_i|)\cdot v_i$.
But $2 \cdot |\mathcal{\tilde{B}}_k| + |\mathcal{\tilde{B}}_i|-1\ge |\mathcal{\tilde{B}}_k| + |\mathcal{\tilde{B}}_i|$, and
so the pacman strategy is at least as profitable as this.

Before proceeding with the inductive step, denote by $\tau^P$ the number of periods required by
the pacman strategy to sell to (almost) every active consumer in $\mathcal{G}_{i,k}$. Observe that $\tau^P \leq k-i+1$
but that some consumers may not be able to buy the good. Specifically, the consumers (if any) that do not
buy all belong to the first $k-i$ time periods, that is from period $t$ up to period $t+k-i-1$. All those periods
are associated with blocks $\mathcal{B}_i$ and $\mathcal{B}_{i+1}$, since the number of periods in these
two blocks adds up to at least $1 + 2^{k-i-1}$, which is not less than $k-i$ for all $k\geq 2$.

Now if the retailer announces a first price of $v_{\alpha}=2^{\alpha-1}$, for some $\alpha<k$, then the
active consumers of blocks $\mathcal{B}_{j}$
with $j \geq \alpha$ will buy due to the \emph{get-it-while-you-can} strategy. Period $t+1$
then begins a subgame with at least one fewer block in which, by induction, pacman is the
best response. Let $\Pi^{\alpha}(\mathcal{G}_{i,k})$ denote the retailer revenue obtained by announcing
a first period price of $v_{\alpha}$ ($\alpha<k$). We can now write  necessary and sufficient conditions for the pacman
strategy to provide a higher or equal revenue than any such deviation. Let $\tau^{\alpha}$ be the
number of periods needed to sell to all the consumers (that buy) if the retailer announces a first
period price of $v_{\alpha}$ and then follows the pacman strategy. Thus,
$\tau^{\alpha} \geq \tau^{P} - (k-\alpha)$.
The corresponding additional surplus generated by selling to potentially more consumers is at most
\begin{equation*}
(k-\alpha)\cdot \min[v_{\alpha}, v_{i+1}]
\end{equation*}
To see this, recall that these $k-\alpha$ consumers belong to blocks $\mathcal{B}_i$ and $\mathcal{B}_{i+1}$.

On the other hand, by setting an initial price of $v_{\alpha}$ the retailer cannot obtain full value from
the consumers in the block $\mathcal{B}_k$. There is at least one such consumer, so this causes the
retailer to lose at least $v_k-v_{\alpha}$.
To prove that pacman is a best response, it then suffices to show that the potential gains of announcing a
price $v_{\alpha}$ do not compensate the losses. To show this, we have
 \begin{eqnarray} \label{diff_revenue}
\Pi^{pacman}(\mathcal{G}_{i,k}) - \Pi^{\alpha}(\mathcal{G}_{i,k})
&\geq& (v_k -v_{\alpha})- (k-\alpha)\cdot \min[v_{\alpha}, v_{i+1}] \nonumber \\
&=& (2^{k-\alpha}-k+\alpha-1)\cdot v_{\alpha} \nonumber \\
&\geq& 0 \vspace{10 mm}  \hspace{1.0cm} \mbox{\ [as $k-\alpha\geq 1$]\ }  \nonumber
\end{eqnarray}
It follows that we have a subgame perfect Nash equilibrium.
In this equilibrium, the retailer achieves near perfect price discrimination. Every consumer
except (possibly) the first $n-1$ consumers buy the good at exactly their valuation. The revenue obtained
is therefore
\begin{equation*}
\Pi^{CP} \ = \ \sum_{i=1}^n 2^i\cdot 2^{n-i} - (n-1)\cdot 1
\ =\  n\cdot 2^n-(n-1)
\ \ge\  n\cdot (2^n-1)
\ =\  n\cdot \Pi^{PA}
\end{equation*}
%\begin{eqnarray*}
%\Pi^{CP} &=& \sum_{i=1}^N 2^i\cdot 2^{N-i} - (N-1)\cdot 1\\
%&=& N\cdot 2^N-(N-1) \\
%&\ge& N\cdot (2^N-1) \\
%&=& N\cdot \Pi^{PA}
%\end{eqnarray*}
Recall that the number of consumers in this game is $N(\mathcal{G}_i)=T=2^{n} - 1$. Thus
$$\Pi^{CP} \ge \log N(\mathcal{G}_i) \cdot \Pi^{PA} $$ as desired.
\end{proof}

%As a corollary, there is no constant-factor bound on how much more a storable good
%monopolist can earn under a no-commitment mechanism in comparison to the revenue obtained
%with an optimal price-commitment mechanism.
%\begin{corollary} \label{no_constant_factor_thm}
%For any $k \in \mathbb{N}$, there exists a game $\mathcal{G}$ such that $\Pi^{CP}(\mathcal{G}) > k \cdot \Pi^{PA}(\mathcal{G})$.
%\end{corollary}

The following lemma shows that Theorem \ref{log_factor_thm} is tight.
\begin{lemma}\label{lem:log_factor_thm}
 For any game $\mathcal{G}$, $\frac{\Pi^{CP}(\mathcal{G})}{\Pi^{PA}(\mathcal{G})} = O(\log(N(\mathcal{G})) +\log T)$.
\end{lemma}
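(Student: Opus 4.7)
The plan is to sandwich $\Pi^{CP}(\mathcal{G})$ from above and $\Pi^{PA}(\mathcal{G})$ from below by a common pair of quantities and then combine them via a harmonic sum. For the upper side, I would observe that in any SPNE (or single-buyer equilibrium) of the contingent pricing game, each consumer's equilibrium payoff must be at least the payoff of the trivial ``never buy'' deviation, which is zero. In the multi-buyer case this requires
\[
\sum_t p_t\, q_{i,t} + c\sum_t S_{i,t} \ \leq\ \sum_t u_{i,t}\, x_{i,t} \ \leq\ \sum_t u_{i,t},
\]
where the final inequality uses unit demand ($x_{i,t}\in\{0,1\}$). Summing over $i$ gives
\[
\Pi^{CP}(\mathcal{G}) \ \leq\ V(\mathcal{G}) \ :=\ \sum_{i=1}^{N(\mathcal{G})}\sum_{t=1}^{T} v_{i,t}.
\]
The same bound holds in the single-buyer case, since the buyer's total utility is at most $\sum_t U(H,t)=\sum_{i,t}v_{i,t}$ by the participation constraint.

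For the lower side, any constant-price strategy is a legal preannounced policy. Under a fixed price $p$, buying early adds a non-negative storage cost and so is never strictly beneficial; hence precisely the pairs $(i,t)$ with $v_{i,t}\geq p$ are purchased, and
\[
\Pi^{PA}(\mathcal{G}) \ \geq\ p\cdot F(p), \qquad F(p):=\bigl|\{(i,t): v_{i,t}\geq p\}\bigr|.
\]

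To finish, sort the valuations in non-increasing order as $w_1\geq w_2\geq\cdots\geq w_M$, with $M\leq N(\mathcal{G})\cdot T$. Taking $p=w_k$ yields $F(w_k)\geq k$, hence $w_k\leq \Pi^{PA}(\mathcal{G})/k$. Chaining the two envelopes,
\[
\Pi^{CP}(\mathcal{G}) \ \leq\ V(\mathcal{G}) \ =\ \sum_{k=1}^{M} w_k \ \leq\ \Pi^{PA}(\mathcal{G})\sum_{k=1}^{M}\frac{1}{k} \ \leq\ (\ln M+1)\cdot\Pi^{PA}(\mathcal{G}),
\]
and $\ln M\leq \ln(N(\mathcal{G})\,T)$ yields the desired $O(\log N(\mathcal{G})+\log T)$ ratio.

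The only delicate step I anticipate is the individual rationality argument for the multi-buyer SPNE: one consumer's deviation could in principle trigger a different continuation by the retailer for the other consumers. The clean way around this is to note that the ``never buy'' deviation gives the deviator herself a payoff of exactly zero regardless of the retailer's continuation strategy, so the per-consumer participation inequality on her spending holds in every SPNE, and hence the aggregate bound $\Pi^{CP}(\mathcal{G})\leq V(\mathcal{G})$ survives unchanged.
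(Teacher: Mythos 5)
Your proof is correct and follows essentially the same route as the paper's: bound $\Pi^{CP}$ above by the sum of all valuations (the paper asserts this as the perfect-price-discrimination bound, which your individual-rationality argument justifies), bound $\Pi^{PA}$ below by the best fixed price so that the $k$-th largest valuation is at most $\Pi^{PA}/k$, and sum the harmonic series over the at most $N(\mathcal{G})\cdot T$ valuations.
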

\begin{proof}
Let $\Pi^{F}(\mathcal{G})$ denote the optimal profit that can be obtained by the retailer in $\mathcal{G}$ by setting a
fixed price along the game. Clearly, $\Pi^{PA}(\mathcal{G}) \geq \Pi^{F}(\mathcal{G})$.
Let $v_1, v_2, \dots,v_{\ell}$ denote, in decreasing order, the set of all consumer valuations
over all time periods. Observe that ${\ell} \le N(\mathcal{G})\cdot T$.
The contingent pricing mechanism cannot extract more profit than a perfect price discrimination
mechanism. Thus $$\Pi^{CP}(\mathcal{G}) \leq \sum_{k=1}^{\ell} v_k$$
Thus, it suffices to show the following inequality:
\begin{eqnarray} \label{inequality_to_prove_upper_bound}
\Pi^{F}(\mathcal{G}) \cdot H_{\ell} \geq \sum_{k=1}^{\ell} v_k,
\end{eqnarray}
where $H_{\ell} =\sum_{i=1}^{\ell} \frac{1}{i}= O( \log( N(\mathcal{G})\cdot T)) = O( \log( N(\mathcal{G}) +\log T)$.
To do this, set $j^* = \arg \max_{j}  \{ j \cdot v_{j} : j \in [l] \}$. Thus, a retailer that announces a static price of $v_{j^*}$ would
obtain a profit of $j^* \cdot v_{j^*} = \Pi^{F}(\mathcal{G})$.
Without loss of generality, scale the consumer valuations such that  $j^* \cdot v_{j^*} =1$. We then
have, for all $i \in [\ell]$, that
$i \cdot v_i \leq 1$. Hence,  $v_i \leq  \frac{1}{i}$.
Inequality (\ref{inequality_to_prove_upper_bound}) follows.
\end{proof}

%To conclude this section, let's briefly discuss an important practical implication of these results. To our knowledge, all
%previous monopoly models with a finite number of periods have the property that the
%contingent pricing mechanism cannot significantly outperform the preannounced pricing mechanism.
%Since preannounced pricing strategies are typically easy to implement and popular
%with consumers, this means there is little practical benefit for a retailer to explore the use of
%more complex contingent pricing mechanisms. Our results, however, provide a setting in which it may indeed
%be beneficial for the retailer to explore such options.

\section{Concave Storage Costs}
We now examine our model under the more general setting of concave storage costs. Here the marginal cost of storage is non-decreasing function in quantity.
Let $\mathcal{C}: \mathbb{N} \to \mathbb{R}_{\geq 0}$ be an arbitrary concave function
such that $\mathcal{C}(q)$ denotes the cost of storing $q$ units of the good for one period. (Here $\mathcal{C}(0)=0$.)
Because linear costs are a special case of concave costs, the example in Theorem~\ref{log_factor_thm}
shows the contingent pricing mechanism can produce an $\Omega(\log N)$ factor more profits than the optimal preannounced pricing mechanism
in the case of concave costs. Furthermore, it can be seen that the $O(\log N)$ upper bound of
Lemma~\ref{lem:log_factor_thm} also applies in the case of concave storage cost. We now examine
the question of whether storage occurs under preannounced pricing mechanisms. The answer differs
for the single-buyer and multi-buyer settings.

\subsection{The Single-Buyer Case.}\label{sec:concave-single}
We extend Theorem \ref{lemma_buy_on_time} from linear to concave storage costs in the single-buyer setting.
Specifically, we show that there exists an optimal price-commitment strategy under which
no storage occurs.
%Since our result also holds for the case where demand is arbitrarily divisible, it also generalizes
%the result \citet{dudine2006storable} (Lemma 2) where they proved that no storage occurs under the commitment mechanism in the
%case of linear inventory costs.

\begin{theorem}\label{theorem_single_buyer_concave}
In the single-buyer setting where storage costs are concave, there always exists a revenue maximising price commitment
strategy that induces no storage.
\end{theorem}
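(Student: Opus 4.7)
My plan is to extend the exchange argument of Lemma~\ref{lemma_buy_on_time} to the concave-cost setting, proceeding by contradiction. I would suppose $(p_1, \ldots, p_T)$ is an optimal preannounced pricing strategy and that, among the buyer's optimal responses, $R^* = (q_t, x_t, S_t)$ is the one that lexicographically minimises total storage $\sum_t S_t$ and then maximises retailer revenue. Assume $R^*$ has $S_s \geq 1$ for some period $s$; I would pick $s$ as the earliest such period and let $t^* > s$ be the earliest period at which a unit stored from $s$ is actually consumed.

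I would then compare $R^*$ with the \emph{shift} response $R'$ that buys one fewer unit at $s$, one additional unit at $t^*$, and stores one fewer unit in each period $r$ with $s \leq r < t^*$. A direct decomposition of the buyer's objective yields
\[
U(R^*) - U(R') \;=\; p_{t^*} - p_s - \sum_{r=s}^{t^*-1}\bigl[\mathcal{C}(S_r) - \mathcal{C}(S_r - 1)\bigr],
\]
and since $R'$ has strictly smaller total storage than $R^*$, the minimum-storage choice forces this difference to be strictly positive. Next I would perturb the strategy by replacing $p_s$ with $p_s + \epsilon$ for small $\epsilon > 0$. Because the buyer's action space is finite (purchases are bounded since $V(H, t) = 0$), there is a positive minimum utility gap between $R^*$ and any strictly suboptimal competing response; for $\epsilon$ smaller than this gap divided by the maximum purchase quantity, $R^*$ remains the buyer's chosen response, and the revenue increases by $\epsilon q_s > 0$, contradicting the optimality of $(p_t)$.

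The principal obstacle is managing responses tied in utility with $R^*$, since such a response with smaller $q_s$ could become strictly preferred under the perturbation. The key observation I would invoke is that any tied response $\tilde R$ with $q_s^{\tilde R} < q_s^{R^*}$ and the same consumption profile must have strictly smaller total storage --- because the ``missing'' purchases at $s$ must be shifted to later periods, which strictly shortens the corresponding storage windows --- contradicting the minimum-storage choice. The subtle remaining edge case is a tied response with a different consumption pattern, corresponding to $V(x_s^*, s) = p_s$; here I would instead decrease $p_{t^*}$ to the effective storage-inclusive cost $p_s + \sum_{r=s}^{t^*-1}[\mathcal{C}(S_r) - \mathcal{C}(S_r - 1)]$, which leaves the buyer indifferent between storing and purchasing directly at $t^*$. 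Under the no-storage tie-break, this local modification eliminates the storage event without decreasing revenue; iterating across all storage events then produces the desired no-storage optimum.
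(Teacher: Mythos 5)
Your central $\epsilon$-perturbation at the purchase period $s$ has a genuine gap in the tie analysis. You dismiss utility-tied responses with fewer period-$s$ purchases by asserting that the ``missing'' purchases must be shifted to \emph{later} periods (hence have less storage, contradicting the minimum-storage selection of $R^*$). That is not forced: the same consumption profile can also be achieved by shifting a purchase to an \emph{earlier} period $s'<s$ and storing it through the window $[s',s)$. Since $s$ is the earliest storage period of $R^*$, the storage levels before $s$ are zero, so such a response is tied in utility exactly when $p_{s'}+(s-s')\cdot\mathcal{C}(1)=p_s$, which nothing in your setup excludes. This tied response has \emph{strictly larger} total storage, so it is perfectly compatible with your lexicographic choice of $R^*$; but once you raise $p_s$ to $p_s+\epsilon$ it loses only $\epsilon(q_s-1)$ rather than $\epsilon q_s$ and becomes strictly preferred, so the buyer abandons $R^*$ and a unit that used to sell at $p_s$ now sells at $p_{s'}=p_s-(s-s')\mathcal{C}(1)<p_s$ (whenever storage costs are positive). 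Revenue need not increase, and the contradiction with optimality of $(p_1,\dots,p_T)$ collapses. It is also telling that, outside the fallback, your main argument never invokes concavity, whereas the paper's proof uses it essentially.

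Your fallback --- lowering $p_{t^*}$ to the storage-inclusive cost $p_s+\sum_{r=s}^{t^*-1}\bigl[\mathcal{C}(S_r)-\mathcal{C}(S_r-1)\bigr]$ --- is in fact the paper's actual argument, not a peripheral edge case, and it needs the verifications you skip. The paper works on a maximal storage interval ending at a consumption period $\tau$, uses a FIFO assignment of marginal storage costs together with concavity to show that every unit stored into $\tau$ was bought at the last purchase period $\alpha$ of the interval, sets the new price at $\tau$ equal to the storage-inclusive cost of the last unit bought at $\alpha$ (which strictly exceeds $p_\alpha$, so postponed sales raise revenue), and then --- crucially --- argues that the reduced price at $\tau$ opens no new profitable deviation in which the buyer diverts purchases intended for later consumption into period $\tau$: by concavity any such deviation would already have been profitable under the old prices via buying at $\alpha$ and storing. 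Your proposal asserts ``eliminates the storage event without decreasing revenue'' without any of these checks, and deploys the fallback only for the tie $V(x^*_s,s)=p_s$, not for the earlier-purchase ties above. The sketch could plausibly be repaired by running the paper's consumption-period repricing in all cases, but as written the argument does not go through.
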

\begin{proof}
Let $\mathcal{D}_1$ denote any optimal price-commitment strategy for the retailer, i.e the sequence of $T$ prices,
and let $\mathcal{D}_2$ denote the buyer best response to $\mathcal{D}_1$, i.e. an optimal sequence of the number
 of items bought in each time period. Let the storage levels under $\mathcal{D}_2$ be $S_1, S_2,\hdots, S_{T-1}$,
where $S_t$ denotes the number of units stored from period $t$ to period $t+1$ for all $t \in [T-1]$.
Let $Q(\mathcal{D}_2) = \sum_{t=1}^{T-1} S_t$ be the total amount of storage.  Without loss of generality, we assume
that given two purchasing schedules that yield the same utility the
consumer picks the one with the lowest total storage. Note that the total storage cost
incurred by the consumer is $\sum_{t=1}^{T-1} \mathcal{C}(S_t)$.

Let $\Delta\mathcal{C}(q) = \mathcal{C}(q+1) - \mathcal{C}(q)$. Thus, $\Delta\mathcal{C}(q)$ represents the additional
cost of storing one extra unit, given that $x-1$ items are already stored.
To conduct our analysis, we assign the storage costs incurred over the units as follows.
We order the items in increasing order of purchase time; we arbitrarily order any units purchased in the same period.
We now assume that units are consumed in a {\em first-in-first-out} (FIFO) fashion; given a choice, the buyer consumes
the unit purchased earliest.
%First, we consider that the consumer uses the units stored in a {\em first-in-first-out} (FIFO) fashion. For example, if the
%consumer stores one unit from $t=1$ to $t=2$ (i.e. $S_1=1$) and buys an extra unit to store at period $t=2$ (i.e. $S_2=2$)
%and that at period $t=3$ it consumes one unit from storage, then the unit consumed must be the unit that was bought at
%period 1.
Observe that the storage cost at period $t$ is
$$\mathcal{C}(S_t) = \sum_{\ell=1}^{S_t} \Delta\mathcal{C}(\ell-1)$$
As we have an ordering on the items, we may assign to the $\ell$th of these $S_t$ units the storage cost
$\Delta\mathcal{C}(\ell-1)$.

Assume that $Q(\mathcal{D}_2)>0$. Take an interval $\mathcal{I}$ consisting of the periods $\{i, i+1, \dots, i+k=\tau\}$, such that
$S_i=S_{\tau}=0$ and $S_t>0$ for all $i<t<\tau$. By adding a dummy time period $0$ with
$S_0=0$, such an interval must exist as $S_T=0$.
Next, let $\alpha$ denote the last period in the $\mathcal{I}\setminus \{\tau\}$ at which the consumer
bought an item. Thus $q_{\alpha}>0$ and $q_{t}=0$ for all $\alpha < t < \tau$.

Because $S_{\tau-1}>0$ and $S_{\tau}=0$ we have that consumption $x_{\tau}$ at time $\tau$ is positive
and strictly exceeds the number of items $q_{\tau}$ purchased in that period.
Moreover, we claim that every item consumed at period $\tau$ was bought in period $\alpha$. Suppose not.
Then by the FIFO policy every item purchased at time $\alpha$ is consumed in period $\tau$, and at
least one item purchased in period $\beta$ is consumed in period $\tau$, where
$\beta < \alpha$ is the last period before $\alpha$ at which there was a purchase.
Instead of purchasing that item at time $\beta$ for consumption at time $\tau$, the consumer could
have purchased it at time $\alpha$. By viewing this as the last of the resultant $S_{\alpha}+1$
items stored at time $\alpha$, we see that this does not affect the
storage cost of any other item bought in any time period.
Thus, by consumer rationality, the price paid at time $\beta$ plus the storage
cost for the item over the interval $[\beta, \alpha]$ must be less than the price announced at period $\alpha$.
Specifically
$$ p_{\beta} + \Delta\mathcal{C}(S_{\beta}-1) + \Delta\mathcal{C}(S_{\beta+1}-1)
+ \hdots +  \Delta\mathcal{C}(S_{\alpha-1}-1) \ <\  p_{\alpha}$$
But then, by concavity of the storage costs, it is better for the consumer to purchase
all of the items currently bought in period $\alpha$,
in period $\beta$ instead. Thus, as claimed, every item consumed in period $\tau$
was bought in period $\alpha$.

Now suppose the retailer modifies her price commitment strategy by changing the price at period $\tau$ to
$$p'_{\tau} = p_{\alpha} + \sum_{j=\alpha}^{\tau-1} \Delta\mathcal{C}(|S_j|-1).$$
Call the resulting price-commitment strategy for the retailer $\mathcal{D}_1'$. Faced now with $\mathcal{D}_1'$,
we claim the consumer utility cannot decrease if she purchases as under $\mathcal{D}_1$ except
that  she buys $S_{\tau-1}$ items less in period $\alpha$ and $S_{\tau-1}$ items more in period $\tau$.
Under $\mathcal{S}$,  the price and storage costs associated with the last item bought at
period $\alpha$ is
$$ p_{\alpha} + \sum_{j=\alpha}^{\tau-1}\Delta\mathcal{C}(|S_j|-1)$$
But this is exactly the price $p'_\tau$. By FIFO and the assignment rule for storage costs,
postponing the purchase of that item
until time $\tau$ will not alter the storage costs assigned to any other item (note that storage
quantities after period $\tau$ remain unchanged).
Thus the consumer is indifferent to postponing the purchase of the item until time $\tau$
when the retailer uses $\mathcal{D}_1'$. But by the concavity of the storage costs, a similar
argument implies that
the consumer may postpone the
purchase of every item bought in period $\alpha$ (that is, $S_{\tau-1}$) until period $\tau$ without a reduction in utility
under $\mathcal{S}'$

In addition, we claim that the consumer cannot obtain a higher utility using another strategy against the new prices. If so,
an improvement was also possible against the old prices, a contradiction. To see this,
recall that the only storage modification induced by the retailer change of strategy from $\mathcal{D}_1$
to $\mathcal{D}_1'$ was a reduction in storage from period $\alpha$ to period $\tau$
of $S_{\tau-1}$. By FIFO and the assignment rule for storage costs, this storage reduction can only
contribute to an increase in the storage costs for a potential new item bought before period $\tau$, and for
items already bought before period $\tau$ the storage costs remain the same.
On the other hand, assume that it is now beneficial to buy at period $\tau$, for price $p'_{\tau}$, an
addition item or items for consumption at
time $\tau$ or later. But then, by concavity, it was also beneficial to do so under the old prices. Thus,
the postponement strategy is
optimal against the new prices.

To conclude, observe that the new prices yield a higher profit for the retailer than the old prices as $p'_{\tau} > p_{\alpha}$.
This contradicts the optimality of $\mathcal{D}_1$. Thus, there is an optimal solution without storage.
\end{proof}

\subsection{The Multi-Buyer Case.}\label{sec:concave-multi}
Unlike in the single-buyer case, Theorem \ref{lemma_buy_on_time} cannot be extended to the
 more general setting of concave costs when there are multiple buyers.
 %This fact may complicate considerably the computation of an optimal price-commitment strategy.

%\bf [Mini intro....]}
\begin{theorem}\label{theorem_multi_buyer_concave}
There are instances of the multi-buyer setting where storage costs are concave in which every
revenue maximising price commitment strategy induces some storage.
\end{theorem}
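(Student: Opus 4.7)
The plan is to prove the theorem by constructing an explicit counter-example: a three-period multi-buyer instance with concave storage costs whose storage-allowed optimum strictly exceeds its no-storage optimum, which forces every revenue-maximising preannounced strategy to induce storage.

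The instance uses $T=3$, a concave cost function $C(1)=3$ and $C(2)=3.01$ (strictly concave since $C(2)<2C(1)$), a ``heavy'' consumer Alice with $u_{1,t}=40$ for $t=1,2,3$, and five ``light'' consumers $E_1,\dots,E_5$ each with $u_{i,3}=43$ and zero elsewhere. The first step is to compute the no-storage optimum. Writing down the no-storage constraints (single-unit storage $p_{t+1}\le p_t+C(1)$ and $p_3\le p_1+2C(1)$, together with Alice's bulk constraint $2p_1+C(2)+C(1)\ge p_2+p_3$) and breaking into cases by which consumption periods Alice chooses, I will show that the uniform schedule $p_1=p_2=p_3=40$ is revenue-maximising among no-storage schedules and gives revenue $3\cdot 40+5\cdot 40=320$.

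Next I exhibit the storage-inducing schedule $p_1=36.98$, $p_2=40$, $p_3=42.97$ and verify that its revenue is $325.79>320$. The key is to check that Alice uniquely prefers the ``bulk'' plan of buying three units in period~$1$ (storing two through the $p_1\to p_2$ transition and one through the $p_2\to p_3$ transition): her bulk net surplus is $120-3\cdot 36.98-C(2)-C(1)=3.05$, while an enumeration of every other buy-and-store pattern (direct purchase each period, any single partial-store, skipping one or more periods, buying two in period~$1$ with storage then skipping period~$3$, etc.) shows each has strictly smaller net surplus---the closest competitor ``buy 2 in $p_1$ with one stored for $p_2$, skip $p_3$'' gives $3.04$. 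For each $E_i$, buying directly at $p_3$ gives net $0.03$, storing from period~$1$ costs $p_1+2C(1)=42.98$ (net $0.02$), and storing from period~$2$ costs $43$ (net $0$), so each $E_i$ strictly buys in period~$3$. The revenue is therefore $3\cdot 36.98+5\cdot 42.97=325.79$.

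Since this storage-inducing strategy strictly beats the no-storage maximum, no storage-free schedule can be revenue-maximising, which proves the theorem. The main obstacle is purely bookkeeping: verifying Alice's bulk is uniquely optimal against a finite but sizeable list of consumption-storage patterns, and tuning the parameters so that bulk wins by a strictly positive margin while the light consumers still strictly prefer direct purchase. The concavity $C(2)<2C(1)$ is used precisely to make storing two units through one transition cheaper per unit than two separate single-unit stores; this is exactly what lets Alice's bulk plan dominate her partial-storage alternatives while the single-unit storage thresholds $p_1+2C(1)$ and $p_2+C(1)$ remain unprofitable for the $E_i$'s.
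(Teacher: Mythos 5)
Your proposal is correct and follows essentially the same route as the paper: an explicit $T=3$ counterexample with a nearly flat two-point concave storage cost ($\mathcal{C}(2)$ barely above $\mathcal{C}(1)$), one consumer who is induced to bulk-buy early and store, and a group of period-3 consumers whose higher willingness to pay can only be harvested when that storage occurs, so that a storage-inducing schedule strictly beats every storage-free one. The paper's instance instead pins down $p_1$ and $p_3$ via large masses of single-period consumers and derives a contradiction on $p_2$, while you compare the no-storage optimum ($320$) directly against an explicit storage schedule ($325.79$); your numerical claims (Alice's $3.05$ vs.\ $3.04$, the $E_i$'s $0.03$ vs.\ $0.02$ vs.\ $0$, and the $320$ bound) all check out.
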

\begin{proof}
We provide a simple example with $T=3$ time periods. Let the storage cost
function $\mathcal{C}: \mathbb{N} \to \mathbb{R}_{\geq 0}$ be defined as:
%
%$$\mathcal{C}(q) := \left\{
%\begin{array}{lI}
%\frac{3}{2} & \textrm{ if } q=1 \\
%\frac{3}{2} + \epsilon & \textrm{ if } q \geq 2
%\end{array}\right.$$

$$ \mathcal{C}(q) := \left\{
\begin{array}{ll}
1.5 & \textrm{ if } q=1 \\
1.5 + \epsilon & \textrm{ if } q = 2
\end{array}\right. $$
Here $\epsilon$ is a tiny positive number.
Now let there be $N = n_1 + n_2 + 1$ buyers where $n_1$ and $n_2$ are very large numbers.
The first $n_1$ consumers have a value of 1 in the first period and zero in later periods.
The second $n_2$ consumers have zero value in the first two periods and a value of $4$ in the last period.
Finally the last consumer has a value of 2.75 in the second period, a value of 3 in the third
period and no value in the first period.

One strategy for the retailer is to set $p_1=1, p_2 = \infty , p_3 = 4$. In that case, $n_1$ consumers
would buy and consume at $t=1$, $n_2$ consumers would buy and consumer at $t=3$, and the remaining
consumer would buy at $t=1$ two units incurring in a storage cost of $1.5 + 1.5 + \epsilon = 3 + \epsilon$.
This pricing strategy yields a profit for the retailer of $(n_1 + 2)\cdot 1  + n_2 \cdot 4$. Observe that is
almost optimal in the sense that all except one consumer pay their total valuation.

Suppose there exists another pricing strategy $(p'_1, p'_2, p'_3)$ that is at least as profitable as the one
mentioned above, but that induces no storage. Since $n_1$ and $n_2$ are very large numbers, any such
price commitment strategy requires that the first $n_1$ buyers pay a price of $1$ and the $n_2$ consumers
pay a price of $4$. Thus $p'_1=1$ and $p'_3=4$.

If the remaining consumer buys at period 1, she obtains an utility of
$2.75 + 3 - 2\cdot 1 - (1.5+\epsilon) - 1.5 = 0.75 - \epsilon$. Thus in order for the
consumer to buy at period 2 (and not use storage) at a price $p'_2$ her utility has to be
at least equal to $0.75 - \epsilon$. If the consumer buys one unit at $t=2$ we have that
$2.75 -  p'_2  \geq 0.75 -\epsilon $. Therefore, $p'_2 \leq 2 + \epsilon$.
%\begin{eqnarray*}
%2.75 -  p'_2  &\geq& 0.75 -\epsilon \\ \label{inequality_1.1_counter_example}
%p'_2  & \leq & 2 + \epsilon \\ \label{inequality_1.2_counter_example}
%\end{eqnarray*}
On the other hand, if the consumer buys two units at $t=2$ we have that
$2.75 + 3 - 2 \cdot p'_2 - 1.5 \geq 0.75 -\epsilon$.
This implies that $p'_2 \leq 1.875 + \epsilon/2$.
%\begin{eqnarray*}
%2.75 + 3 - 2 \cdot p'_2 - 1.5 \geq 0.75 -\epsilon \\ \label{inequality_2.1_counter_example}
%4.25 - 0.75 + \epsilon \geq 2 \cdot p'_2  \\ \label{inequality_2.2_counter_example}
%p'_2 \leq 1.875 + \epsilon/2  \\ \label{inequality_2.3_counter_example}
%\end{eqnarray*}
Observe now than in any of these cases, $p'_2$ needs to be so low that the $n_2$ consumers
would always prefer to buy at $t=2$ and store rather than to buy at $t=3$. As a result, we proved
that in this game every optimal price commitment strategy induces some storage.
\end{proof}

\section{Conclusions and future research}
In this paper we studied the dynamic pricing problem of a monopolistic retailer who sells indivisible storable goods to strategic consumers under a finite time horizon. The case where storable goods are arbitrarily divisible has been studied by \cite{dudine2006storable}. We focused in the understanding of two major pricing mechanisms: (1) \emph{preannounced pricing}, where the retailer announces at the beginning of the game the price to be charged at each time period; (2) \emph{contingent pricing}, the retailer announces prices sequentially at each time period, which means that the price announced is allowed to be contingent to game history.

First, we showed that computing the optimal preannounced pricing policy can be done in polynomial time (specifically in $O(N^2T^3)$ time). This result is in line with divisible goods model where the authors have shown that finding the optimal preannounced pricing can be done also efficiently by solving a different dynamic program (see supplemental appendix B in \cite{dudine2006storable}).

We constructed a simple example to show that under the contingent pricing policy rather than the preannounced pricing mechanism, (i) prices can be lower, (ii) retailer revenues can be higher, and (iii) consumer surplus can be higher. Surprisingly, these three facts are in complete contrast to the case of a retailer selling divisible storable goods \citep{dudine2006storable}. Intuitively, under contingent pricing, the retailer can sometimes extract more revenue when the goods are indivisible by performing some price discrimination difficult to overcome for the consumer. When the retailer tries to perform a similar price discrimination under the divisible goods setting, the consumer can finely optimize the (continuous) quantities bought, making the price discrimination less effective. The example illustrates precisely this phenomenon. We then proceed to quantify how much more profitable can the contingent pricing mechanism be with respect to preannounced pricing. We showed that for a market with $N$ consumers, a contingent policy can produce a multiplicative factor of $\Omega(\log N)$ more revenues than a preannounced policy. This is surprising, since to our knowledge, all previous monopoly models with a finite number of periods have the property that the
contingent pricing mechanism cannot significantly outperform the preannounced pricing mechanism.
Since preannounced pricing strategies are typically easy to implement and popular with consumers, this means there is little practical benefit for a retailer to explore the use of more complex contingent pricing mechanisms. Our results, however, provide a setting in which it may indeed be beneficial for the retailer to explore such options.

This paper leaves interesting questions for future research. One of them is to extend the preannounced pricing algorithm of this paper to more general settings such as when storage costs are concave or convex. Given that under concave costs, consumers might store goods under preannounced pricing, the construction of such algorithm might not be simple. While our results show that there are cases where the retailer revenue is much larger using contingent pricing rather than preannounced pricing, it is unclear whether the opposite is also true. That is, are there cases where the revenue obtained using a preannounced pricing is arbitrarily larger than by using a contingent pricing mechanism? Another research direction is to characterize the (subgame perfect) Nash equilibrium of the price contingent mechanism. \citet{berbeglia2014} characterized the subgame perfect equilibrium for a durable good pricing problem (which is special case of the problem studied here) and showed that such equilibrium can be computed in polynomial time. Extending such characterization or providing a hardness result for the problem studied in this paper is an interesting challenge. Finally, it would be interesting to extend the model studied in this paper by incorporating additional constraints appearing often in practice such as, for example, imposing an extra fixed cost every time a consumer makes an order (i.e. ordering cost).

\section*{Acknowledgements}
The authors would like to thank to Gustavo Vulcano and Jun Xiao for helpful discussions.

\bibliography{references_duropoly}

\end{document}